\title{A Note on Degree vs Gap of Min-Rep Label Cover and Improved Inapproximability for Connectivity Problems}
\author{
Pasin Manurangsi\thanks{Email: \texttt{pasin@berkeley.edu}. Supported by NSF under Grants No. CCF 1655215 and CCF 1815434.} \vspace{-0.5em}\\
UC Berkeley
}
\begin{document}

\maketitle

\begin{abstract}
This note concerns the trade-off between the degree of the constraint graph and the gap in hardness of approximating the Min-Rep variant of Label Cover (aka Projection Game). We make a very simple observation that, for NP-hardness with gap $g$, the degree can be made as small as $O(g \log g)$, which improves upon the previous $\tilde{O}(g^{1/2})$ bound from a work of Laekhanukit~\cite{Bun14}. Note that our bound is optimal up to a logarithmic factor since there is a trivial $\Delta$-approximation for Min-Rep where $\Delta$ is the maximum degree of the constraint graph.

Thanks to known reductions~\cite{CLNV14,CK12,CMVZ16,Bun14}, this improvement implies better hardness of approximation results for Rooted $k$-Connectivity, Vertex-Connectivity Survivable Network Design and Vertex-Connectivity $k$-Route Cut. 
\end{abstract}

\section{Introduction}

We study variants of the Label Cover problem. The input to these problems is a \emph{label cover instance} $\Pi$, which consists of (i) a bipartite graph $G = (A, B, E)$ called the \emph{constraint graph}, (ii) an \emph{alphabet set (aka label set)} $\Sigma$, and, (iii) for each edge $(a, b) \in E$, a (projection) \emph{constraint} $\pi_{(a, b)}: \Sigma \to \Sigma$ which can be any function from $\Sigma$ to itself.

A \emph{labeling} $\phi: (A \cup B) \to \Sigma$ is a function that assigns each vertex a label from $\Sigma$. The value of a labeling $\phi$, denoted by $\val_{\Pi}(\phi)$, is the fraction of edges $(a, b) \in E$ such that $\pi_{(a, b)}(\phi(a)) = \phi(b)$. These edges (or constraints) are said to be \emph{satisfied} by $\phi$.

A \emph{multilabeling} $\psi: (A \cup B) \to \cP(\Sigma)$ is a generalization of labeling in which each vertex can be now assigned multiple labels. (Here $\cP(\Sigma)$ denote the power set of $\Sigma$.) The value of a multilabeling $\psi$, denoted by $\val_{\Pi}(\psi)$, is the fraction of edges $(a, b) \in E$ such that $\pi_{(a, b)}(\psi(a)) \cap \psi(b) \ne \emptyset$ (i.e. there exists $\sigma_a \in \psi(a)$ and $\sigma_b \in \psi(b)$ such that $\pi_{(a, b)}(\sigma_a) = \sigma_b$). Similar to before, such edges are said to be satisfied by $\psi$. The \emph{cost} of a multilabeling $\psi$, denoted by $\cost(\psi)$, is simply $\sum_{v \in A \cup B} |\psi(v)|$.

The two variants of Label Cover we focus on are the \emph{Max-Rep} and \emph{Min-Rep} variants. The Max-Rep variant asks for a labeling with maximum value, whereas the Min-Rep variant asks for a multilabeling that satisfies all constraints and has minimum cost. We use $\val(\Pi)$ and $\minrep(\Pi)$ to denote the optimum of $\Pi$ for Max-Rep and Min-Rep respectively, i.e., $\val(\Pi) := \max_{\phi: (A \cup B) \to \Sigma} \val(\phi)$ and $\minrep(\Pi) = \min_{\psi: (A \cup B) \to \cP(\Sigma) \text{ such that } \val(\psi) = 1}\cost(\psi)$.

Both Min-Rep and Max-Rep problems are standard starting points used in numerous hardness of approximation reductions. In some applications, the parameters one get in inapproximability results for Label Cover affect the resulting inapproximability factors of the reductions. The two parameters of interest in this note are the alphabet size $|\Sigma|$ and the maximum degree of the constraint graph $G$. We denote the latter by $\Delta(G)$, or simply $\Delta$ when $G$ is clear from context.

Suppose that we want a gap of $g > 1$ in our NP-hardness of approximation of {\em Max-Rep}. Raz's parallel repetition theorem~\cite{Raz98} combined with the PCP Theorem~\cite{AS,ALMSS} implies that $|\Sigma| = O(g^{p})$ suffices where $p$ is some (large) constant. This was later improved by Khot and Safra~\cite{KS13} to $|\Sigma| = O(g^6)$ and later by Chan~\cite{Chan16} to $|\Sigma| = O(g^2 \log g)$. On the other hand, standard sparsification arguments imply that the maximum degree can be made as small as $O(g \log g)$. (See e.g.~\cite{Bun14} where such an argument is carried out in full details.)

For {\em Min-Rep}, the typical way to show hardness of approximation with factor $g$ is via proving hardness of approximating of {\em Max-Rep} of factor $O(g^2)$. This was indeed also the route taken in~\cite{Bun14}; due to the square loss in parameter from Max-Rep, for a gap of $g$ in Min-Rep, the alphabet size and the maximum degree now become $O(g^4 \log^2 g)$ and $O(g^2 \log^2 g)$ respectively.

\subsection{Our Results}

Our main result is an improvement for the degree parameter. In particular, we show that the degree can be made as small as $O(g \log g)$ (instead of $\tilde{O}(g^2)$ in~\cite{Bun14}) as stated below.

\begin{theorem} \label{thm:main}
For every positive integer $g > 1$, it is NP-hard (under randomized reduction) to, given a Label Cover instance $\Pi$ of alphabet size $O(g^4 \log^2 g)$ and maximum degree $O(g \log g)$, approximate $\minrep(\Pi)$ to within a factor of $g$.
\end{theorem}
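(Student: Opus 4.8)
The plan is to start from an NP-hard gap instance of Max-Rep and to convert it into a Min-Rep instance while being careful about the degree. The naive route, taken in~\cite{Bun14}, is: take a Max-Rep instance $\Pi$ with $\val(\Pi) \le 1/g^2$ in the NO case and $\val(\Pi)=1$ in the YES case, sparsify $\Pi$ so that its degree is $O(g^2\log g)$, and then invoke the standard fact that $\minrep(\Pi) \le |A|+|B|$ in the YES case while $\minrep(\Pi) \ge \Omega(g \cdot (|A|+|B|))$ in the NO case (because any multilabeling of total cost $< g(|A|+|B|)$ induces, by a sampling/averaging argument, a labeling of value $> 1/g^2$). This loses a quadratic factor in both the alphabet and the degree. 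My goal is to shave the degree down to $O(g\log g)$ --- i.e., to match (up to logs) the Max-Rep degree --- while tolerating the quadratic blow-up only in the alphabet, which is unavoidable anyway via this reduction.

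The key idea I would pursue is to \emph{decouple} the gap amplification from Max-Rep to Min-Rep from the degree-reduction step, and in particular to do the degree reduction \emph{after} one has already set up the Min-Rep structure, rather than on the underlying Max-Rep instance. Concretely: begin with a Max-Rep instance of gap $g$ (so NO-case value $\le 1/g$) and alphabet $O(g^2 \log g)$, obtained from Chan's result~\cite{Chan16}; this instance already has degree reducible to $O(g \log g)$ by the standard sparsification argument. Now the point is that to get a Min-Rep gap of $g$ one does \emph{not} need a Max-Rep gap of $g^2$ if one is willing to take a \emph{randomized} product/sampling construction that directly produces the Min-Rep hard instance. Specifically, I would consider the following: pass to a random subsample of edges to control degree, and then argue the Min-Rep lower bound via a second-moment / concentration argument rather than the crude averaging bound. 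The averaging argument ``$\cost(\psi) < g(|A|+|B|) \Rightarrow \val(\psi) \ge 1/g^2$'' is lossy precisely because it only uses that \emph{some} vertex is lightly labeled; a more careful counting --- for each edge, the probability (over a random labeling drawn from $\psi$) that it is satisfied is at least $1/(|\psi(a)||\psi(b)|)$, and $\sum 1/(|\psi(a)||\psi(b)|)$ is large when $\sum|\psi(v)|$ is small by convexity --- can be pushed to give value $\ge 1/g$ rather than $1/g^2$ when $\cost(\psi) < g(|A|+|B|)$, provided the graph is suitably regular and expanding. Making the constraint graph (nearly) regular and a good expander is exactly what a careful sparsification gives us, so the two requirements are compatible.

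So the step-by-step plan is: (1) invoke Chan's theorem to get a Max-Rep instance $\Pi_0$ with $\val=1$ (YES) versus $\val \le 1/g$ (NO) and alphabet $O(g^2\log g)$; (2) apply one round of a parallel-repetition-type or simple product construction --- or simply note we can take the alphabet as $\Sigma_0 \times \Sigma_0$ and reuse $\Pi_0$ --- to arrange that the alphabet is $O(g^4\log^2 g)$ while keeping the gap structure we need; (3) randomly sparsify the constraint graph to make it $(O(g\log g))$-regular (up to lower-order terms), and use a Chernoff bound plus union bound over all labelings to argue that the Max-Rep value of the sparsified instance is still $1$ in the YES case and $O(1/g)$ in the NO case, and moreover that it inherits an expansion/regularity property; (4) reinterpret the sparsified instance as a Min-Rep instance: the YES case gives $\minrep \le |A|+|B|$ trivially, and for the NO case run the refined convexity/averaging argument above to show any feasible multilabeling has cost $\ge \Omega(g(|A|+|B|))$, using regularity to turn the edge-wise $1/(|\psi(a)||\psi(b)|)$ bound into a global bound; (5) rescale constants to turn $\Omega(g)$ into exactly $g$ by starting from a slightly larger Max-Rep gap (a constant factor), which only changes the alphabet and degree by constant factors.

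The main obstacle I anticipate is \textbf{Step (4)} --- specifically, whether the refined averaging argument genuinely gains the full quadratic factor, or only partially. The crude bound gives $\val(\psi) \ge 1/\cost(\psi)^2 \cdot (\text{stuff})$, so a cost budget of $g(|A|+|B|)$ only rules out value below $\approx 1/g^2$; to rule out value below $1/g$ with the same budget one needs to exploit that the ``mass'' $|\psi(v)|$ is spread out and that the graph mixes, and it is not a priori clear that a cost budget of $o(g\cdot(|A|+|B|))$ on a $(g\log g)$-regular graph forces value $\ge 1/g$ rather than $\ge 1/g^2$. A clean way around this, which I would fall back on if the direct argument stalls, is to instead prove Min-Rep hardness with gap $g$ from Max-Rep hardness with gap $g^2$ \emph{but} starting from Chan's $O(g^2\log g)$-alphabet instance with the gap \emph{squared} by one extra product step --- the key realization being that the degree only needs to be $O(g\log g)$, not $O(g^2\log g)$, because the sampling union bound for the \emph{squared} instance needs to survive only over labelings of the squared instance, and the sparsification threshold scales with the \emph{logarithm} of the number of labelings divided by the target value $1/g^2$ times $\ldots$ --- i.e., a more careful bookkeeping in the sparsification shows the required degree is governed by $g$ and not $g^2$. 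Either way, the technical heart is a sharper analysis of the interaction between sparsification and the Max-Rep-to-Min-Rep gap translation; everything else is routine.
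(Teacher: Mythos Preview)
Your first approach, the ``refined convexity/averaging'' in step~(4), cannot succeed. If every vertex receives exactly $g$ labels (so $\cost(\psi)=gN$), the random single-labeling argument gives only $\val(\phi)\ge 1/g^2$ per satisfied edge, and this is tight: regularity or expansion of the constraint graph does not improve the $1/(|\psi(a)||\psi(b)|)$ bound. So a Max-Rep NO-value of $1/g$ is genuinely compatible with a multilabeling of cost $gN$, and the square-root loss is real.

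Your fallback also fails as written. If you start from Max-Rep gap $1/g^2$ and then sparsify while \emph{preserving Max-Rep value} via a union bound over the $|\Sigma|^N$ labelings, the Chernoff exponent for a fixed labeling is $\Theta(\gamma\Delta N)$ with $\gamma=1/g^2$, so you need $\Delta=\Omega(g^2\log|\Sigma|)$---exactly the $\tilde O(g^2)$ you are trying to avoid. Your sentence ``the sparsification threshold scales with the logarithm of the number of labelings divided by the target value $1/g^2$'' computes to $g^2\log|\Sigma|$, not $g\log|\Sigma|$.

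The missing idea is to skip preserving Max-Rep value of the sparse instance altogether and instead do the union bound directly over \emph{multilabelings of small cost}. Concretely: start from a dense regular instance with Max-Rep NO-value $\gamma=\Theta(1/g^2)$ and alphabet $|\Sigma|=O(g^4\log^2 g)$. The standard (lossy) averaging, applied to the \emph{original dense} instance, says every multilabeling $\psi$ with $\cost(\psi)\le cN/\sqrt{\gamma}=\Theta(gN)$ has $\val_\Pi(\psi)<1/2$, i.e.\ a \emph{constant} fraction of edges are unsatisfied by $\psi$. Now subsample edges to expected degree $\Delta$. For a fixed such $\psi$, Chernoff gives that fewer than $0.2\,p|E|$ of the sampled edges are $\psi$-unsatisfied with probability $\exp(-\Theta(\Delta N))$---no $\gamma$ factor, because you are concentrating around a constant fraction. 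The number of multilabelings of cost $\le \Theta(gN)$ is $\binom{N|\Sigma|}{\Theta(gN)}\le(2|\Sigma|)^{O(gN)}$, so the union bound needs only $\Delta=\Theta(g\log|\Sigma|)=O(g\log g)$. After trimming high-degree vertices (which removes only a tiny fraction of sampled edges), every cheap multilabeling still leaves some edge of the sparse instance unsatisfied, giving $\minrep(\Pi')>\Theta(gN)$ in the NO case; the YES case is routine. The whole point is that the costly square-root loss is taken \emph{once}, on the dense instance, where it converts $\gamma$ to a constant and thus costs nothing in the subsequent concentration step.
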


We remark that our degree bound of $O(g \log g)$ here is optimal up to $O(\log g)$ factor; when the maximum degree is $\Delta$, the trivial algorithm that, for each edge $(a, b) \in E$, adds an arbitrary label $\sigma \in \Sigma$ to $a$ and the corresponding label $\pi_{(a, b)}(\sigma)$ to $b$ achieves a $\Delta$-approximation for Min-Rep.

Both our proof and Laekhanukit's use the same reduction: roughly speaking, we start with a hard instance of Max-Rep (that may not be sparse) and subsample a sparse subgraph of the constraint graph to create a new instance. The main (simple) observation that leads to the improvement is that we can analyze the soundness of Min-Rep of this new instance directly without going through Max-Rep; namely, by bounding the probability that each multilabeling satisfies all subsampled edges and use union bound, we arrive at the improvement. The argument is formalized in Section~\ref{sec:main-proof}.

\subsubsection{Improved Hardness for Connectivity Problems}

As mentioned earlier, Label Cover problems are the starting points of many hardness of approximation results. Specifically, the degree parameter affects the resulting inapproximation ratios for the following connectivity problems: Rooted $k$-Connectivity, Vertex-Connectivity Survivable Network Design and Vertex-Connectivity $k$-Route Cut. Before we state what our improvement implies for these problems, let us first give definitions of these problems and state some of the known approximation guarantees for the problems.

{\bf Rooted $k$-Connectivity (R$k$C) Problem.} In R$k$C, we are given an undirected graph $G$, a root vertex $r$ and a set of terminals $T \subseteq V$. The goal is to find a subgraph $H$ of $G$ with minimum number of edges such that, for every $t \in T$, there exist at least $k$ vertex-disjoint paths from $r$ to $t$ in the subgraph $H$. The problem has been extensively studied in literature~\cite{CCK08,CK08a,CK08,CK12,CK14,Nutov12,Nutov09,CLNV14}; the best known polynomial-time approximation algorithm due to Nutov~\cite{Nutov12} achieves an approximation ratio of $O(k \log k)$. %On the lower bound front, using the reduction of Cheriyan \etal~\cite{CLNV14}, Laekhanukit gave the best known $\tilde{\Omega}(k^{1/10})$ factor hardness of approximating the problem~\cite{Bun14}.

{\bf Vertex-Connectivity Survivable Network Design (VC-SND) Problem.} In VC-SND, we are given an undirected graph $G$, a set $T \subseteq V$ of terminals and, for each pair $s, t \in T$ of terminals, a connectivity requirement $r_{st} \in \N \cup \{0\}$. The goal is to find a subgraph $H$ of $G$ with minimum number of vertices such that the requirements are satisfied, i.e., for every $s, t \in T$, there are at least $r_{st}$ vertex-disjoint paths from $s$ to $t$. The best known approximation algorithm for this problem by Chuzhoy and Khanna~\cite{CK12} achieves an $O(k^3\log |T|)$-approximation where $k$ denotes the maximum connectivity requirement (i.e. $k = \max_{s, t \in T} r_{st}$). %Again, the best known hardness of approximation is from~\cite{Bun14} which has a factor of $\tilde{\Omega}(k^{1/6})$; this is shown via an earlier reduction of Chakraborty, Chuzhoy and Khanna~\cite{CCK08}.

{\bf Vertex-Connectivity $k$-Route Cut (VC-$k$RC) Problem.} Given an undirected graph $G$, $\cD$ source-sink pairs $(s_1, t_1), \dots, (s_{\cD}, t_{\cD})$ and an integer $k$, the goal is to find a smallest set $E'$ of edges such that, when edges in $E'$ are removed from $G$, there are less than $k$ vertex-disjoint paths from $s_i$ to $t_i$ for all $i \in [k]$. The best known true approximation algorithm of Chuzhoy \etal~\cite{CMVZ16} achieves an approximation ratio\footnote{It should be noted that, in~\cite{CMVZ16}, the authors did not explicitly state this; rather they gave an $O(k)$-approximation algorithm for the single source-sink pair case. It is clear that we can simply run this algorithm on each source-sink pair and take the union as the answer, which indeed yields an $O(\cD k)$-approximation.} $O(\cD k)$. We note here that Chuzhoy \etal~\cite{CMVZ16} also devise a bi-criteria approximation algorithm for the problem with a ratio that can be better than the aforementioned (true) approximation when no vertex participates in too many source-sink pairs; we choose not to state the ratio here, since our result does not apply to bi-criteria approximation algorithms.

The above three problems are shown to be hard to approximate to within a factor of $k^{\delta}$ for some small (implicit) constant $\delta > 0$ in~\cite{CLNV14,CK12,CMVZ16}. In \cite{Bun14}, Laekhanukit formulated the reduction in terms of Min-Rep Label Cover and observed that one can make $\delta$ explicit if the degree and the alphabet size can be made explicit polynomials of the gap. More precisely, the following theorem, which is a restatement of Theorem 3.1 in \cite{Bun14}, captures the dependency between the alphabet size, the degree and the resulting connectivity parameter.

\begin{theorem}[\cite{CLNV14,CK12,CMVZ16,Bun14}] \label{thm:red-param}
There exists an approximation-preserving reduction from Min-Rep Label Cover of maximum degree $\Delta$ and alphabet $\Sigma$ to R$k$C, VC-SND and VC-$k$RC with the following connectivity parameters: $k = O(\Delta^3 |\Sigma| + \Delta^4)$ for Rooted $k$-connectivity, $k = O(\Delta |\Sigma| + \Delta^2)$ for Vertex-Connectivity Survivable Network Design, and $k = O(\Delta |\Sigma|)$ for Vertex-Connectivity $k$-Route Cut.
\end{theorem}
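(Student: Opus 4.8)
The plan is to give a gadget reduction with a common core, specialized afterward to each of the three target problems, following \cite{CLNV14,CK12,CMVZ16} as streamlined in \cite{Bun14}. Let $\Pi = (G = (A, B, E), \Sigma, \{\pi_e\}_{e \in E})$ be the given Min-Rep instance of maximum degree $\Delta$. The core introduces a \emph{label vertex} $\ell_{u,\sigma}$ for every $u \in A \cup B$ and $\sigma \in \Sigma$, with the intended semantics ``$\ell_{u,\sigma}$ is used iff $\sigma \in \psi(u)$'', so that the number of label vertices participating in a connectivity solution tracks $\cost(\psi)$ up to a constant. For every constraint $e = (a,b) \in E$ it introduces an \emph{edge gadget} $H_e$ with a source $s_e$ and a sink $t_e$, designed so that $H_e$, restricted to the chosen vertices, admits $k$ internally-vertex-disjoint $s_e$--$t_e$ paths if and only if $\ell_{a,\sigma_a}$ and $\ell_{b,\sigma_b}$ are both chosen for at least one consistent pair $\pi_e(\sigma_a) = \sigma_b$: concretely one gives each consistent pair its own $s_e$--$t_e$ path through $\ell_{a,\sigma_a}$ and $\ell_{b,\sigma_b}$, plus exactly $k-1$ further paths through ``dummy'' vertices private to $e$. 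These gadgets are then wired into the global structure dictated by each problem: a single root $r$ connected to all $s_e$ through a highly connected distribution network, with each $t_e$ a terminal of requirement $k$, for R$k$C; a terminal pair $(s_e, t_e)$ with $r_{s_e t_e} = k$ for VC-SND; and a source--sink pair $(s_e, t_e)$ per constraint, set up so that minimum-cost $k$-route cuts encode low-cost multilabelings, for VC-$k$RC.

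Completeness and soundness then follow the usual template. Given a multilabeling $\psi$ with $\val_{\Pi}(\psi) = 1$ and $\cost(\psi) = C$, choosing the label vertices $\{\ell_{u,\sigma} : \sigma \in \psi(u)\}$ together with all dummy vertices yields, in each $H_e$, the required $k$ vertex-disjoint paths (one through a consistent pair, which exists because $e$ is satisfied, plus the $k-1$ dummy paths), so the connectivity solution costs $O(C)$ once a standard scaling step (replacing each label vertex by $N$ serial copies for a large $N$, as in \cite{Bun14}) makes the fixed per-gadget and distribution-network overhead lower-order. Conversely, any feasible connectivity solution of cost $C'$ yields a multilabeling of cost $O(C')$: by Menger's theorem the chosen subgraph admits $k$ vertex-disjoint $s_e$--$t_e$ paths inside $H_e$, and since at most $k-1$ of them can be dummy-only, at least one uses the label vertices $\ell_{a,\sigma_a}, \ell_{b,\sigma_b}$ of some consistent pair, so setting $\psi(u) := \{\sigma : \ell_{u,\sigma} \text{ chosen}\}$ satisfies every constraint of $\Pi$, and $\cost(\psi)$ is at most the number of chosen label vertices, hence $O(C')$. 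Combining the two directions makes the reduction approximation-preserving up to constants.

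What remains, and what I expect to be the main obstacle, is to pin down the connectivity parameter $k$ while keeping both directions intact. The difficulty is that a single label vertex $\ell_{u,\sigma}$ must serve all $\deg(u) \le \Delta$ gadgets incident to $u$; if it is literally shared, one solution can route through $\ell_{u,\sigma}$ in two gadgets simultaneously, manufacturing vertex-disjoint paths that no labeling accounts for and breaking soundness. The standard fix is to replace $\ell_{u,\sigma}$ by $\Theta(\Delta)$ copies glued through a fan-in/fan-out gadget and to raise each per-gadget demand to $\Theta(\Delta)$ so that no copy is ever overloaded; for R$k$C one must additionally make the root-to-$s_e$ distribution network $\Omega(\Delta^2)$-vertex-connected and replicate it per label, which accumulates to $k = O(\Delta^3 |\Sigma| + \Delta^4)$, whereas the plain pair-demand structures of VC-SND and VC-$k$RC pay only $k = O(\Delta |\Sigma| + \Delta^2)$ and $k = O(\Delta |\Sigma|)$ respectively. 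Verifying that these copy counts and demand values \emph{simultaneously} preserve completeness (enough genuine disjoint paths survive) and soundness (no spurious ones are created) is the crux of the argument; the rest is routine bookkeeping.
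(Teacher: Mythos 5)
First, a point of calibration: the paper contains no proof of this statement. It is imported wholesale from the cited works --- the text explicitly calls it ``a restatement of Theorem 3.1 in \cite{Bun14}'' --- and is used purely as a black box to derive the corollaries. So there is no in-paper argument to compare yours against; the only question is whether your reconstruction would stand on its own. Your sketch does describe the right general template (label vertices tracking $\cost(\psi)$, per-constraint source--sink gadgets whose $k$-connectivity certifies a consistent label pair, Menger's theorem plus a ``dummy paths'' counting argument for soundness, and serial-copy scaling to suppress gadget overhead), which is indeed the architecture of the reductions in \cite{CLNV14,CK12,CMVZ16} as organized by \cite{Bun14}.

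However, as a proof of the theorem \emph{as stated}, there is a genuine gap, and you identify it yourself: the entire content of the statement is the specific dependence of $k$ on $\Delta$ and $|\Sigma|$ --- $O(\Delta^3|\Sigma| + \Delta^4)$ for R$k$C, $O(\Delta|\Sigma| + \Delta^2)$ for VC-SND, and $O(\Delta|\Sigma|)$ for VC-$k$RC --- and these are exactly the quantities you do not derive. The claim that fan-in/fan-out copies of each $\ell_{u,\sigma}$ and a $\Omega(\Delta^2)$-connected distribution network ``accumulate to'' the stated bounds is a plausible narrative, not an argument; one would need to exhibit the concrete gadget, count the vertex-disjoint paths in the completeness direction, and rule out spurious routings (e.g.\ two gadgets incident to $u$ simultaneously borrowing the same copy of $\ell_{u,\sigma}$) in the soundness direction, and these counts are where the three different exponents actually come from. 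In addition, VC-$k$RC is a \emph{cut} problem, so the gadget semantics must be inverted --- a cheap edge set whose removal destroys $k$-route connectivity must correspond to a cheap multilabeling --- and your one-clause treatment of that case does not address how the dummy paths and label-vertex copies interact with edge removal rather than subgraph selection. Since the theorem is anyway a citation, the honest options are either to cite it as the paper does or to carry out the gadget constructions in full; the current sketch does neither.
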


As mentioned earlier, for NP-hardness of Min-Rep Label Cover with gap $g$, the result of~\cite{Bun14} gets the degree and the alphabet size to be as small as $\tilde{O}(g^2)$ and $\tilde{O}(g^4)$ respectively. With this, Laekhanukit applies Theorem~\ref{thm:red-param} which immediately gives NP-hardness of approximation with factors $\tilde{\Omega}(k^{1/10}), \tilde{\Omega}(k^{1/6})$ and $\tilde{\Omega}(k^{1/6})$ for Rooted $k$-connectivity, Vertex-Connectivity Survivable Network Design and Vertex-Connectivity $k$-Route Cut respectively. By starting with Theorem~\ref{thm:main} which has a better degree dependency than that of~\cite{Bun14}, we immediately arrive at the following improved hardness of approximation results for the three problems.

\begin{corollary}
For every sufficiently large connectivity parameter $k$, it is NP-hard to approximate the rooted $k$-connectivity problem to within $\tilde{\Omega}(k^{1/7})$ factor, the vertex-connectivity survivable network design problem to within $\tilde{\Omega}(k^{1/5})$ factor, and, the vertex-connectivity $k$-route cut problem to within $\tilde{\Omega}(k^{1/5})$ factor.
\end{corollary}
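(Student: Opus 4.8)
The plan is to compose the two results already in hand: Theorem~\ref{thm:main} supplies a Min-Rep Label Cover instance that is NP-hard to approximate within a factor $g$ and whose alphabet size and maximum degree are explicit polynomials (up to logarithmic factors) in $g$, while Theorem~\ref{thm:red-param} converts such an instance, in an approximation-preserving way, into instances of Rooted $k$-Connectivity, VC-SND and VC-$k$RC. The entire argument is then a matter of substituting the parameters of Theorem~\ref{thm:main} into the expressions for $k$ given by Theorem~\ref{thm:red-param} and inverting the resulting bound to read off the hardness factor as a function of $k$.

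Concretely, I would fix an integer $g > 1$ and apply Theorem~\ref{thm:main} to obtain (under a randomized reduction) a Min-Rep instance $\Pi$ with $|\Sigma| = O(g^4 \log^2 g)$ and maximum degree $\Delta = O(g \log g)$ for which approximating $\minrep(\Pi)$ within $g$ is NP-hard. Plugging these into Theorem~\ref{thm:red-param} gives
\[
k \;=\; O(\Delta^3 |\Sigma| + \Delta^4) \;=\; \tilde{O}(g^7) \qquad \text{for Rooted } k\text{-Connectivity},
\]
\[
k \;=\; O(\Delta |\Sigma| + \Delta^2) \;=\; \tilde{O}(g^5) \qquad \text{for VC-SND},
\]
\[
k \;=\; O(\Delta |\Sigma|) \;=\; \tilde{O}(g^5) \qquad \text{for VC-}k\text{RC}.
\]
Since the reduction is approximation-preserving, each output instance inherits the factor-$g$ hardness of $\Pi$.

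Finally, inverting the relations above — and using that $g = \mathrm{poly}(k)$ in each case, so that $\log g = \Theta(\log k)$ — yields $g = \tilde{\Omega}(k^{1/7})$ for Rooted $k$-Connectivity and $g = \tilde{\Omega}(k^{1/5})$ for VC-SND and VC-$k$RC, which is exactly the claimed inapproximability. The remaining details are routine: carefully tracking the $\mathrm{poly}\log$ factors hidden inside $\tilde{O}(\cdot)$ and $\tilde{\Omega}(\cdot)$, a standard padding step so that the statement holds for \emph{every} sufficiently large $k$ rather than only the values $k(g)$ produced by the reduction, and noting that the resulting hardness is, like Theorem~\ref{thm:main}, under a randomized reduction. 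I do not expect a genuine obstacle here; the one point worth double-checking is that "approximation-preserving" in Theorem~\ref{thm:red-param} transfers the multiplicative gap $g$ exactly (and not merely up to constants), so that the exponents $1/7$ and $1/5$ survive intact.
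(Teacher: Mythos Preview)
Your proposal is correct and is precisely the argument the paper uses: it simply plugs the parameters $|\Sigma| = O(g^4\log^2 g)$ and $\Delta = O(g\log g)$ from Theorem~\ref{thm:main} into the expressions for $k$ in Theorem~\ref{thm:red-param} and inverts. The paper treats this as immediate and gives no further details, so your write-up is in fact more explicit than the original.
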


Another parameter considered in~\cite{Bun14} is the number of demand pairs $\cD$. Note that $\cD = |T|$ in R$k$C and $\cD$ is the number of $\{s, t\} \subseteq T$ such that $r_{st} > 0$ in VC-SND. Laekhanukit~\cite{Bun14} proved the following theorem:

\begin{theorem}[\cite{Bun14}] \label{thm:red-param-demand-pairs}
There exists an approximation-preserving reduction from Min-Rep Label Cover of maximum degree $\Delta$ to R$k$C, VC-SND and VC-$k$RC with $\cD = O(\Delta^2)$.
\end{theorem}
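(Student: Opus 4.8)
The plan is to reuse the reduction already underlying Theorem~\ref{thm:red-param} and to change only how the demand pairs are organized. Recall that the constructions of~\cite{CLNV14,CK12,CMVZ16} build, for every edge $e=(a,b)$ of the constraint graph $G=(A,B,E)$, an \emph{edge gadget} attached to \emph{vertex gadgets} of $a$ and $b$; read off naively, one obtains one demand pair per edge (a terminal paired with the root for R$k$C, a source--sink pair for VC-SND and VC-$k$RC), so $\cD=|E|$, which cannot be bounded in terms of $\Delta$ alone. The whole point is therefore to let a single demand pair be responsible for many edges at once, without changing the connectivity threshold $k$ or the alphabet (these stay as in Theorem~\ref{thm:red-param}).

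The first step is to decompose $E$ into few induced matchings of $G$. Form the conflict graph $H$ on vertex set $E$ in which $(a,b)$ and $(a',b')$ are adjacent whenever they share an endpoint or one of $(a,b'),(a',b)$ lies in $E$; since $\Delta(G)\le\Delta$, a direct count shows every vertex of $H$ has degree $O(\Delta^2)$, so a greedy coloring partitions $E$ into $t=O(\Delta^2)$ classes $M_1,\dots,M_t$, each an induced matching. Insisting on induced matchings is what matters: the edge gadgets of the edges of a fixed $M_\ell$, together with their vertex gadgets, are pairwise vertex disjoint \emph{and} no constraint edge of $G$ joins the vertex gadgets of two of them, so once they share a sub-instance no unintended ``shortcut'' route can arise.

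The second step is, for each $\ell$, to glue the $|M_\ell|$ edge gadgets into one sub-instance controlled by a single demand pair while preserving $k$-connectivity. A plain serial concatenation fails for vertex connectivity, since the junctions would be cuts of size one; instead I would build a layered routing gadget with $\Theta(k)$ internally vertex-disjoint ``lanes'', each lane being the serial concatenation of one route through every edge gadget of $M_\ell$. Then every $s_\ell$--$t_\ell$ route must traverse all of $M_\ell$'s edge gadgets (here the induced-matching property rules out bypasses), yet $k$ vertex-disjoint $s_\ell$--$t_\ell$ routes still exist when all inner gadgets are ideal, so this one demand pair tests every constraint of $M_\ell$ simultaneously and $\cD=t=O(\Delta^2)$.

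It remains to verify that the reduction stays approximation preserving. For completeness, a multilabeling of $\Pi$ satisfying all constraints gives, edge gadget by edge gadget exactly as in~\cite{CLNV14,CK12,CMVZ16}, a connectivity solution whose lanes use precisely the routes the vanilla solution uses, so its cost is unchanged up to a constant factor of $\minrep(\Pi)$. For soundness, in any feasible connectivity solution the $k$ vertex-disjoint $s_\ell$--$t_\ell$ routes each pass through every edge gadget of $M_\ell$, so restricting to one edge gadget yields $k$ vertex-disjoint routes through it; as in the original soundness analysis this forces the label sets read off at its two endpoints to satisfy the corresponding constraint at a proportional cost, and summing over all gadgets turns a connectivity solution of cost $X$ into a multilabeling of cost $O(X)$ satisfying all constraints. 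The crux I expect is the routing gadget of the second step: it has to (i) never drop the vertex connectivity below $k$ when all inner gadgets are ideal, (ii) force every route through all $|M_\ell|$ edge gadgets, and (iii) inflate the per-edge cost by only a constant, so that the gap $g$ is preserved — this is precisely the technical content worked out in~\cite{Bun14}.
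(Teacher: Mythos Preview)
The paper does not prove Theorem~\ref{thm:red-param-demand-pairs}; it is quoted verbatim as a result of Laekhanukit~\cite{Bun14} and used as a black box. There is therefore no ``paper's own proof'' to compare your proposal against.

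As for the proposal itself: your high-level plan---partition $E$ into $O(\Delta^2)$ induced matchings via a greedy coloring of the conflict graph, then merge the edge gadgets within each matching so that a single demand pair tests all of them---is a coherent outline and the $O(\Delta^2)$ count for induced matchings is correct. However, the substantive part of the argument is entirely deferred: you explicitly say that the ``layered routing gadget'' satisfying properties (i)--(iii) is ``precisely the technical content worked out in~\cite{Bun14}''. That is not a proof, it is a pointer to one. Whether such a gadget exists with the claimed properties for all three target problems (R$k$C, VC-SND, VC-$k$RC) simultaneously, and whether it really preserves the approximation ratio up to constants, cannot be assessed from what you wrote. If your intent is to sketch why the theorem is plausible and cite~\cite{Bun14} for the construction, what you have is fine as an informal summary; if your intent is to give a self-contained proof, the routing gadget and its analysis are the entire content and are missing.
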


Plugging Theorem~\ref{thm:red-param-demand-pairs} to his Min-Rep hardness, Laekhanukit shows $\tilde{\Omega}(\cD^{1/4})$ ratio hardness for all three problems~\cite{Bun14}. By using our improved hardness for Min-Rep instead, we immediately arrive at an improved factor of $\tilde{\Omega}(\cD^{1/2})$:

\begin{corollary}
For every sufficiently large number of demand pairs $\cD$, it is NP-hard to approximate R$k$C, VC-SND and VC-$k$RC to within a factor of $\tilde{\Omega}(\cD^{1/2})$.
\end{corollary}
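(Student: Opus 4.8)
The plan is simply to compose Theorem~\ref{thm:main} with the demand-pair reduction of Theorem~\ref{thm:red-param-demand-pairs}. Fix a gap parameter $g > 1$. Theorem~\ref{thm:main} tells us that it is NP-hard, under a randomized reduction, to approximate $\minrep(\Pi)$ within a factor of $g$ on Label Cover instances whose constraint graph has maximum degree $\Delta = O(g \log g)$ (the alphabet size plays no role here, as Theorem~\ref{thm:red-param-demand-pairs} does not refer to it). Feeding such instances into the approximation-preserving reduction of Theorem~\ref{thm:red-param-demand-pairs}, I obtain instances of R$k$C, VC-SND, and VC-$k$RC that are NP-hard to approximate within the same factor $g$ and that have $\cD = O(\Delta^2) = O(g^2 \log^2 g)$ demand pairs. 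Inverting the relation $\cD = O(g^2 \log^2 g)$ gives $g = \Omega\!\left(\sqrt{\cD}/\log \cD\right) = \tilde\Omega(\cD^{1/2})$, which is exactly the claimed hardness factor.

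To get the statement for \emph{every} sufficiently large $\cD$ rather than only for values of the special form $\Theta(g^2 \log^2 g)$, I would, given a target $\cD$, pick the largest integer $g$ with $c\,g^2 \log^2 g \le \cD$ (where $c$ is the constant hidden in Theorem~\ref{thm:red-param-demand-pairs}) and then pad the resulting connectivity instance up to exactly $\cD$ demand pairs: for R$k$C by adjoining dummy terminals that are already trivially $k$-connected to the root (or by duplicating an existing terminal), and for VC-SND and VC-$k$RC by adjoining pairs with connectivity requirement $0$ (resp. pairs that are already disconnected). Such padding alters neither the optimum nor the set of feasible solutions, so it preserves the inapproximability factor, which is still $g = \tilde\Omega(\cD^{1/2})$ since $g$ was chosen within a polylogarithmic factor of $\sqrt{\cD}$.

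There is no substantial obstacle in this argument: both ingredients are already in hand and the reduction of Theorem~\ref{thm:red-param-demand-pairs} is approximation-preserving, so the proof is essentially bookkeeping. The two points worth a sentence of care are (i) that the hardness inherited from Theorem~\ref{thm:main} is under randomized reductions, hence so is the final statement — composing a deterministic polynomial-time approximation-preserving reduction after a randomized one keeps it a randomized reduction — and (ii) that the padding step above genuinely leaves the approximation gap unchanged for each of the three problems.
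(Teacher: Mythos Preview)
Your proposal is correct and matches the paper's approach exactly: the paper does not even spell out a proof beyond the sentence ``By using our improved hardness for Min-Rep instead, we immediately arrive at an improved factor of $\tilde{\Omega}(\cD^{1/2})$,'' which is precisely the composition of Theorem~\ref{thm:main} with Theorem~\ref{thm:red-param-demand-pairs} that you describe. Your additional remarks on padding to hit every sufficiently large $\cD$ and on the randomized nature of the reduction are sound and are details the paper simply leaves implicit.
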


\section{Preliminaries}

For any graph $G$ and a vertex $u$ of $G$, we use $\Gamma_G(u)$ to denote the set of neighbors of $u$ in $G$.

For succinctness, we will say that a Label Cover instance $\Pi$ is regular or has maximum degree $\Delta$ as a shorthand for its associated constraint graph having such properties. To prove our result, we need the following hardness of approximation for Max-Rep Label Cover due to Chan~\cite{Chan16}:

\begin{theorem}[\cite{Chan16}] \label{thm:Chan}
For some constant $C > 0$, any prime power $q$ and any $\delta > 0$, it is NP-hard to, given a regular Label Cover instance $\Pi$ of alphabet size $q^2$, distinguish between $\val(\Pi) \geqs 1 - \delta$ and $\val(\Pi) < C\log q/q + \delta$. 
\end{theorem}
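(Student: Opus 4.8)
The plan is to present the instance as a two-query probabilistically checkable proof: a strong ``outer'' Label Cover composed with an algebraic ``inner'' verifier whose role is to drive the soundness down to $O(\log q/q)$ while keeping the larger of the two alphabets at $q^2$. This is Chan's strategy \cite{Chan16}, and the only genuinely delicate ingredient is the soundness analysis of the inner verifier. For the outer layer I would fix a Label Cover instance with perfect completeness, a constant alphabet, and soundness at most an arbitrarily small constant $\eta>0$ --- available from the PCP Theorem \cite{AS,ALMSS} after a constant number of rounds of parallel repetition \cite{Raz98}, and passing if convenient to its multi-layered or smooth form. The provers of the final game will, morally, be supplying a labeling of this outer instance, encoded as described next.

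For the inner verifier, identify the alphabet $[q]$ with the $q$-element field and let $H$ be a subgroup of a coordinate space over this field that is \emph{balanced} (each coordinate uniform) and \emph{pairwise independent} (every two coordinates jointly uniform); the $q$-ary Hadamard / first-order Reed--Muller code, with coordinates indexed appropriately, provides such an $H$. Encode each outer label by a folded codeword of this kind and run the standard long-code-composition test: query one ``point'' symbol from $[q]$ together with one affine ``restriction'' symbol (an affine function along a line, i.e.\ a pair of $[q]$-symbols), and accept iff the restriction evaluates at the queried point to the point symbol and the two decoded outer labels satisfy the outer constraint. Honest codewords give completeness $1$ (hence $\geq 1-\delta$). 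For soundness, invoke Chan's main theorem: since $H$ supports a balanced pairwise-independent distribution, a Fourier/invariance argument shows that any strategy accepted with probability exceeding $C\log q/q$ can be list-decoded into a labeling of the outer instance whose value is bounded below by an absolute constant --- impossible once the outer soundness $\eta$ is chosen small enough.

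Reading the test as a bipartite constraint graph, one side stores a single $[q]$-symbol and the other a pair of $[q]$-symbols, so the two alphabet sizes are $q$ and $q^2$, the larger being $q^2$ as required; the edge constraint is the projection ``evaluate the affine restriction at the queried point.'' A final cloning/averaging step makes the graph regular without changing either value, and, combined with the completeness and soundness bounds above, this is exactly Theorem~\ref{thm:Chan}. The reduction itself uses no randomness: it is a deterministic composition applied to the outer PCP.

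The hard part is the soundness step above --- bounding the gain of any cheating strategy over the honest value by $O(\log q/q)$ rather than by some unspecified $q^{-\Omega(1)}$. This is precisely what the balanced-pairwise-independent-subgroup machinery of \cite{Chan16} supplies: Fourier-decompose the test's acceptance indicator, kill the troublesome cross terms using pairwise independence, apply an invariance-style low-degree truncation so that only a few low-degree characters survive, and decode those back into outer labels. It must be run hand in hand with a tight choice of $H$ (equivalently of the Reed--Muller parameters) so that the alphabet stays at $q^2$ while the soundness stays $O(\log q/q)$. The remaining bookkeeping is to verify that the composition does not inflate the alphabet past $q^2$, and that the $\log q$ factor is the genuine cost of a $[q]$-ary test --- entering, e.g., through a plurality/union-bound step over the $q$ possible values of a symbol --- rather than slack in a lossy analysis.
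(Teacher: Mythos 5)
The paper does not actually prove this statement: it is imported wholesale from \cite{Chan16}, the only added remark being that Chan's instance need not be regular and that a regularization step (e.g.\ Sections 4.2 and 4.3 of \cite{Bun14}) \emph{approximately} preserves the value --- which is exactly what the $\pm\delta$ slack in the statement absorbs. Your outline is a fair description of the architecture of Chan's construction, but as a proof it has a genuine gap at the only point that carries the content of the theorem: your soundness step reads ``invoke Chan's main theorem,'' i.e.\ you cite the result (or its technical core) in order to prove it. Everything you do supply --- the PCP Theorem plus constantly many rounds of parallel repetition for the outer game, a long-code-style composition with a balanced pairwise-independent $q$-ary code, the alphabet bookkeeping, and a final regularization --- is standard scaffolding that, run generically, does not by itself yield soundness $C\log q/q$ with alphabet only $q^2$; that sharp trade-off is precisely Chan's contribution, and it rests on his specific direct-sum-of-games analysis and on a careful derivation of projection-game hardness from his CSP hardness, neither of which appears in your sketch beyond a name-check. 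Two smaller issues: the theorem asserts a single alphabet $\Sigma$ of size $q^2$ with constraints $\pi_{(a,b)}:\Sigma\to\Sigma$, so your two-sided $(q,q^2)$ alphabet needs a (trivial but necessary) padding step; and your claim that regularization changes ``neither value'' is too strong --- the known transformations only approximately preserve the value, which is exactly why the statement is phrased with the additive $\delta$ in both the completeness and soundness conditions.
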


We note here that the instance produced in~\cite{Chan16} may not be regular. However, it is known that one can turn any Label Cover instance to a regular instance while approximately preserving the value of the instance. For instance, see Sections 4.2 and 4.3 of~\cite{Bun14}.

\section{Proof of the Main Theorem} \label{sec:main-proof}

In this section, we prove our main result (Theorem~\ref{thm:main}). Our main contribution is in fact a generic randomized sparsification lemma for Label Cover which is stated below.

\begin{lemma} \label{lem:main}
For any $\varepsilon, \gamma > 0$, there is a polynomial time randomized reduction that, given any regular Label Cover instance $\Pi = (G = (A, B, E), \Sigma, \{\pi_e\}_{e \in E})$ where $N = |A| + |B|$, outputs a Label Cover instance $\Pi'$ with alphabet set $\Sigma$ and maximum degree at most $\Delta := 10^6(2\log(2|\Sigma|)/\sqrt{\gamma})$ such that 
\begin{itemize}
\item (Completeness) if $\val(\Pi) \geqs 1 - \varepsilon$, then $\minrep(\Pi') \leqs (1 + \varepsilon\Delta)N$ with probability 0.9, and,
\item (Soundness) if $\val(\Pi) < \gamma$, then $\minrep(\Pi') > (0.06 / \sqrt{\gamma})N$ with probability 0.9.
\end{itemize}
\end{lemma}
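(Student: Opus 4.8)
The plan is to construct $\Pi'$ by randomly subsampling edges of $G$. Concretely, set $p := \Delta / \Delta(G)$ (recall $\Pi$ is regular, so every vertex has degree $\Delta(G)$), and include each edge $e \in E$ in $\Pi'$ independently with probability $p$; keep the same vertex sets, alphabet $\Sigma$, and constraints $\pi_e$ for surviving edges. A Chernoff bound shows that with probability $\geqs 0.9$ every vertex has degree at most, say, $2p\,\Delta(G) = 2\Delta$ in $\Pi'$ (we can absorb this factor of $2$ into the constant $10^6$, or just rescale $p$ by a half); condition on this event, so the maximum-degree guarantee holds deterministically from now on. Note $|E'| \approx p|E|$ in expectation, which will matter for the quantitative bounds below.

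\textbf{Completeness.} Suppose $\val(\Pi) \geqs 1 - \varepsilon$, and fix a labeling $\phi$ achieving this. Build a multilabeling $\psi$ for $\Pi'$: start by assigning each vertex its single label $\phi(v)$, and then, for every edge $e = (a,b) \in E'$ that $\phi$ fails to satisfy, add $\pi_e(\phi(a))$ to $\psi(b)$ (equivalently, add one extra label on the $B$-side per violated surviving edge). This $\psi$ satisfies every edge of $\Pi'$, and $\cost(\psi) \leqs N + |\{e \in E' : \phi \text{ violates } e\}|$. The expected number of violated surviving edges is $p \cdot (\text{number of edges } \phi \text{ violates in } \Pi) \leqs p \varepsilon |E| = \varepsilon \Delta N / 2$ (using $p|E| = \Delta|E|/\Delta(G)$ and $|E| = \Delta(G) \cdot N/2$, since $G$ is bipartite... actually $|E| \leqs \Delta(G) N$ suffices up to constants). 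By Markov, with probability $\geqs 0.9$ this count is at most $O(\varepsilon \Delta N)$; adjusting constants gives $\minrep(\Pi') \leqs (1 + \varepsilon \Delta)N$ on an event of probability $0.9$. One must combine this with the degree event, which is fine by a union bound after rescaling the failure probabilities to $0.95$ each.

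\textbf{Soundness.} This is the heart of the argument and the step I expect to be the main obstacle. Suppose $\val(\Pi) < \gamma$; we want to rule out every cheap multilabeling $\psi$ of $\Pi'$. Fix any multilabeling $\psi$ with $\cost(\psi) \leqs (0.06/\sqrt{\gamma})N$. The key point is that since $\val(\Pi) < \gamma$, for this $\psi$ the fraction of edges $e \in E$ with $\pi_e(\psi(a)) \cap \psi(b) \neq \emptyset$ can be bounded in terms of the ``label mass'' $\sum_v |\psi(v)|$: one shows that if $\psi$ is cheap then only a $\gamma' = \mathrm{poly}(\gamma)$ fraction of edges of $\Pi$ are satisfied by $\psi$ — here one uses a counting/averaging argument converting a low-cost multilabeling into a distribution over labelings, so that $\val(\Pi) < \gamma$ forces $\psi$ to satisfy few edges of $\Pi$ unless $\cost(\psi)$ is large (roughly, a multilabeling of total cost $c$ satisfies at most a $c^2/N^2$-ish fraction of edges beyond what a random labeling would, plus the $\val(\Pi)$ term — this is exactly where the $\sqrt{\gamma}$ threshold and the $0.06$ constant come from). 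Then $\psi$ fails to satisfy a $(1-\gamma')$ fraction of edges of $\Pi$, i.e.\ at least $(1-\gamma')|E|$ edges; the probability that $\Pi'$ avoids all of these (so that $\psi$ happens to satisfy all surviving edges) is at most $(1-p)^{(1-\gamma')|E|} \leqs \exp(-p(1-\gamma')|E|) = \exp(-\Omega((1-\gamma')\Delta N))$. Finally take a union bound over all ``relevant'' multilabelings $\psi$: the number of multilabelings of cost at most $c$ on $N$ vertices with alphabet $\Sigma$ is at most $(|\Sigma|+1)^{c} \cdot \binom{?}{?}$-type bound, crudely $|\Sigma|^{O(c)} \leqs |\Sigma|^{O(N/\sqrt{\gamma})} = \exp(O((\log |\Sigma|/\sqrt{\gamma}) N))$. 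The choice $\Delta = \Theta(\log|\Sigma|/\sqrt{\gamma})$ with a large enough leading constant makes the failure exponent $p(1-\gamma')|E| = \Omega(\Delta N)$ dominate the union-bound exponent $O((\log|\Sigma|/\sqrt{\gamma})N)$, so the probability that any cheap $\psi$ satisfies all of $\Pi'$ is $o(1)$, hence $< 0.1$. Combined with the degree event this gives the soundness claim with probability $0.9$.

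The subtle quantitative pieces — getting the constant in $\Delta$ large enough relative to the union bound, and making the ``cheap multilabeling of $\Pi'$ implies few satisfied edges in $\Pi$'' estimate tight enough to land the $0.06/\sqrt{\gamma}$ threshold — are where the real work is; everything else is Chernoff/Markov/union bound bookkeeping, and the factor-of-$2$ slack in the degree and the flexibility in failure probabilities $0.9$ give enough room to absorb rounding.
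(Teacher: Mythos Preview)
Your overall plan---subsample edges, bound the completeness cost by Markov on the number of violated surviving edges, and in soundness show that every cheap multilabeling leaves a constant fraction of edges of $\Pi$ unsatisfied and then union-bound over all such multilabelings---matches the paper's approach. The paper's ``cheap multilabeling satisfies few edges'' step is exactly your averaging idea (pick one label uniformly from each $\psi(v)$; bound the number of vertices with $|\psi(v)|>0.3/\sqrt{\gamma}$ via the cost constraint), and the threshold it actually proves is only $\val_\Pi(\psi) < 1/2$, which already suffices.

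There is, however, a genuine gap in how you enforce the degree bound. You propose to condition on the Chernoff event that every vertex has degree $\leq 2\Delta$ after sampling. But $\Delta = \Theta(\log|\Sigma|/\sqrt{\gamma})$ is a constant independent of $N$, whereas the union bound over $N$ vertices needs the per-vertex tail to be $o(1/N)$, i.e.\ $\Delta = \Omega(\log N)$. For large $N$ your degree event has probability $o(1)$, so neither rejection sampling (super-polynomial time) nor outputting a trivial instance on failure (completeness and soundness then fail with probability $1-o(1)$) rescues the argument; and even if the event did hold with decent probability, conditioning on it would destroy the edge-independence you use in the soundness calculation. The paper handles this differently: it samples with a much smaller probability $p = 10^{-4}\Delta/D$ and then \emph{deterministically trims} every edge incident to a vertex of degree $>\Delta$, so the degree bound holds unconditionally. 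A Markov argument (not Chernoff over vertices) shows the expected number of trimmed edges is small, hence with probability $0.99$ at most $0.1\,pDn$ edges are trimmed. The soundness analysis then needs matching slack: instead of your event ``no unsatisfied edge survives,'' the paper shows via Chernoff that at least $0.2\,pDn$ unsatisfied edges survive into the intermediate instance, and since at most $0.1\,pDn$ are trimmed, some unsatisfied edge remains in $\Pi'$. Your analysis, as written, does not accommodate the trimming, and the trimming (or something equivalent) cannot be avoided.
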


Before we prove Lemma~\ref{lem:main}, we note that Theorem~\ref{thm:main} follows from it simply by plugging in appropriate parameters.

\begin{proof}[Proof of Theorem~\ref{thm:main}]
Let $C$ be the constant from Theorem~\ref{thm:Chan}. Let $q$ be the smallest prime power such that $q/\log q > 10^5 C g^2$, $\gamma$ be $2C\log q/q$, $\varepsilon$ be $1/\Delta$, and $\delta$ be $\min\{\varepsilon, Cq\log q\}$. Note that $q = \Theta(g^2 \log g)$ and $\gamma = 1/\Theta(g^2)$. From Theorem~\ref{thm:Chan}, it is hard to, given a regular Label Cover instance $\Pi$ with alphabet size $q^2$, distinguish between $\val(\Pi) \geqs 1 - \delta \geqs 1 - \varepsilon$ and $\val(\Pi) < C\log q/q + \delta < 2C\log q/q < 10^{-4}/g^2$, where the inequalities follow from our choices of parameters.

Applying the reduction in Lemma~\ref{lem:main} to $\Pi$, we arrive at an instance $\Pi'$ with alphabet size $q^2 = O(g^4 \log^2 g)$ and maximum degree at most $O(\log q/\sqrt{\gamma}) = O(g \log g)$. Moreover, the completeness guarantee of Lemma~\ref{lem:main} implies that, if $\val(\Pi) \geqs 1 - \varepsilon$, then $\minrep(\Pi') \leqs (1 + \varepsilon\Delta)N = 2N$ with probability 0.9. Similarly, the soundness guarantee of Lemma~\ref{lem:main} implies that, if $\val(\Pi) < 10^{-4}/g^2$, then $\minrep(\Pi') > (0.06/\sqrt{10^{-4}/g^2})N > 2gN$ with probability 0.9. Hence, it is NP-hard (under randomized reduction) to approximate $\minrep(\Pi')$ to within a factor of $g$. 
\end{proof}

The rest of this section is devoted to the proof of Lemma~\ref{lem:main} and is organized as follows. First, we describe the reduction and state a few useful facts in Section~\ref{sec:red}. The soundness and completeness of the reductions are then proved in Sections~\ref{sec:completeness} and~\ref{sec:soundness} respectively.

\subsection{The Reduction} \label{sec:red}

Once again, we remark here that the reduction is exactly the same as that in~\cite{Bun14}, except that the soundness analysis is different. The reduction in~\cite{Bun14} is in turn inspired by a reduction for independent set on bounded degree graphs by Austrin \etal~\cite{AKS11} who used the similar subsampling and high-degree vertex trimming techniques in their reduction; indeed our soundness analysis is more similar to~\cite{AKS11} than to~\cite{Bun14}. Nonetheless, we state the reduction in full here. Before we do so, let us first mention that we may assume that the degree $D$ of the constraint graph in the original instance $\Pi$ is arbitrarily large\footnote{This is because, for any positive integer $t$, we may create $t$ copies of the vertex sets $A, B$ and add constraints correspondingly (i.e. the new graph is $(A \times [t], B \times [t], E_t)$ where $E_t = \{((a, i), (b, j)) \mid (a, b) \in E, i, j \in [t]\}$ with $\pi_{((a, i), (b, j))} = \pi_{(a, b)}$). This ensures that the degree of every vertex is at least $t$ without effecting the value of the instance.}. In particular, we will assume throughout this section that $D \geqs 10000\Delta$ where $\Delta$ is the parameter in the statement of Lemma~\ref{lem:main}.

The reduction proceeds in two steps. First, we create an intermediate instance $\Pi_{\interm} = (G_{\interm} = (A, B, E_{\interm}), \Sigma, \{\pi_e\}_{e \in E_{\interm}})$ by simply include each edge $e \in E$ into $E_{\interm}$ independently at random with probability $p := 0.0001\Delta/D$ where $D$ denotes the degree of the original graph $G = (A, B, E)$. Then, we create the instance $\Pi' = (G' = (A, B, E'), \Sigma, \{\pi_e\}_{e \in E'})$ by removing every edge such that one of its endpoint has degree more than $\Delta$. More formally, let $A' = \{a \in A \mid \Gamma_{G_{\interm}}(a) \leqs \Delta\}$, $B' = \{b \in B \mid \Gamma_{G_{\interm}}(b) \leqs \Delta\}$ and $E' = (A' \times B') \cap E_{\interm}$. We note here that the vertex sets, the alphabet sets and constraints (for remaining edges) of $\Pi_{\interm}$ and $\Pi'$ remain the same from $\Pi$.

Throughout this section, we use $n$ to denote $|A| = |B| = N/2$. The following proposition states a fact that is useful in our proofs. Note that the probabilistic bound can be sharpen to $1 - o(1)$ but, since we do not need it here, we choose to state the simpler proof below.
\begin{proposition} \label{prop:simple-facts}
With probability 0.99, we have $|E_{\interm} \setminus E'| \leqs 0.1p\Delta n$.
\end{proposition}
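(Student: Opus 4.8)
The plan is to bound the expected number of edges removed in the trimming step and then apply Markov's inequality. An edge $e = (a,b) \in E_{\interm}$ is removed (i.e.\ lies in $E_{\interm} \setminus E'$) precisely when $a$ or $b$ has $G_{\interm}$-degree exceeding $\Delta$. So I would write
\[
|E_{\interm} \setminus E'| \;\leqs\; \sum_{v \in A \cup B} \deg_{G_{\interm}}(v) \cdot \mathbf{1}\!\left[\deg_{G_{\interm}}(v) > \Delta\right],
\]
where the sum is over all vertices (each removed edge is counted at least once, via one of its high-degree endpoints). Taking expectations, it suffices to bound $\E\big[\deg_{G_{\interm}}(v)\,\mathbf{1}[\deg_{G_{\interm}}(v) > \Delta]\big]$ for a single vertex $v$. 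Since $G$ is $D$-regular and each edge is kept independently with probability $p$, the random variable $X_v := \deg_{G_{\interm}}(v)$ is distributed as $\mathrm{Bin}(D, p)$ with mean $pD = 0.0001\Delta$.

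The key step is to show that the "truncated tail expectation" $\E[X_v \,\mathbf{1}[X_v > \Delta]]$ is tiny compared to $pD$ — say at most $(pD)/100$ or so — which hinges on the fact that $\Delta$ is a huge multiple of the mean $pD = 0.0001\Delta$ (here $\Delta / (pD) = 10^4$). I would establish this with a standard Chernoff-type bound: since $X_v$ is a sum of independent indicators, $\Pr[X_v \geqs t] \leqs e^{-t}$ for $t \geqs 2e\,\E[X_v]$ (using the form $\Pr[X_v \geqs t] \le (e\,\mu/t)^t$), and then bound the truncated expectation by, e.g., $\E[X_v\,\mathbf{1}[X_v>\Delta]] \leqs \sum_{t > \Delta} \Pr[X_v \geqs t] \leqs \sum_{t>\Delta} e^{-t}$, which is exponentially small in $\Delta$ and hence far below $(pD)/100 = 10^{-6}\Delta$ once $\Delta$ is a sufficiently large constant times $\log(2|\Sigma|)$ (which it is, by definition of $\Delta$, and we may also assume $|\Sigma| \geqs 2$). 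Summing over the $N = 2n$ vertices gives
\[
\E\big[|E_{\interm}\setminus E'|\big] \;\leqs\; N \cdot \frac{pD}{100} \;=\; \frac{2n \cdot pD}{100} \;\leqs\; 0.001\, p\Delta n,
\]
using $D = pD/p$ and $pD = 0.0001\Delta$, i.e.\ the expectation is at most roughly $p\Delta n / 50000$, comfortably below $0.1 p\Delta n / 99$.

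Finally I apply Markov's inequality: $\Pr\big[|E_{\interm}\setminus E'| > 0.1 p\Delta n\big] \leqs \E[|E_{\interm}\setminus E'|] / (0.1 p\Delta n) \leqs 0.01$, giving the claimed probability-$0.99$ bound. The main obstacle — and it is a mild one — is getting the truncated tail expectation estimate with clean constants; the slickest route is to avoid integrating the Chernoff bound by hand and instead note that for a binomial with mean $\mu = 0.0001\Delta$ and threshold $\Delta = 10^4\mu$, we are so deep in the tail that $\Pr[X_v \geqs \Delta] \leqs 2^{-\Delta}$ (or better), and $\E[X_v \mathbf{1}[X_v > \Delta]] \leqs D \cdot \Pr[X_v \geqs \Delta]$ already suffices because $D$ is only polynomially large while the probability is exponentially small in $\Delta \geqs \omega(\log N)$ — wait, one should be careful that $D$ can be taken large but is a fixed parameter, so a cleaner bound is $\E[X_v\mathbf{1}[X_v > \Delta]] \le \sum_{t=\Delta+1}^{D} t\Pr[X_v = t] \le \sum_{t > \Delta} t\, e^{-t} \le e^{-\Delta/2}$, which is what we want. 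Everything else is routine.
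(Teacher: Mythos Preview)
Your high-level strategy---bound $\Ex[|E_{\interm}\setminus E'|]$ and then apply Markov---is exactly the paper's. The paper differs in two minor ways: it sums over \emph{edges} rather than vertices, writing $\Ex[|E_{\interm}\setminus E'|] = \sum_{e\in E}\Pr[e\in E_{\interm}\setminus E']$; and for each edge $(a,b)$ it bounds the conditional probability that $a$ (or $b$) has $G_{\interm}$-degree exceeding $\Delta$ by a \emph{second} application of Markov (the conditional mean degree is $1+p(D-1)\leq 0.0002\,\Delta$, so this probability is at most $0.0002$). Your Chernoff-based truncated-tail estimate is correct but heavier machinery than needed; the paper gets away with Markov alone.

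Your displayed inequality $\tfrac{2n\cdot pD}{100}\leq 0.001\,p\Delta n$ is false, however. Since $pD=10^{-4}\Delta$ while $p\Delta=10^{-4}\Delta^{2}/D$, the left side equals $2\times 10^{-6}\Delta n$ and the right side equals $10^{-7}\Delta^{2} n/D$; with $D\geq 10^{4}\Delta$ (and $D$ allowed to be arbitrarily large) the inequality goes the wrong way. Indeed the threshold $0.1\,p\Delta n=10^{-5}\Delta^{2} n/D$ tends to $0$ as $D\to\infty$, so no expectation bound that does not itself decay with $D$ can establish the proposition as literally stated via Markov. This is a typo in the proposition: summing the paper's per-edge bound $0.0004p$ over $|E|=Dn$ edges gives $\Ex\leq 0.0004\,pDn$ (not $0.0004\,p\Delta n$), and the proposition is in fact invoked later in the soundness proof as $|E_{\interm}\setminus E'|\leq 0.1\,pDn$. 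With $D$ in place of $\Delta$, both the paper's argument and yours (your $e^{-\Delta/2}$ per-vertex bound is then vastly more than sufficient) go through without difficulty.
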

%\begin{enumerate}[(i)]
%\item $|E_{\interm}| \leqs \Delta n$. \label{cond:edges-interm}
%\item $|E_{\interm} \setminus E'| \leqs 0.1p\Delta n$. \label{cond:removed-edges}
%\item $|A \setminus A'| \leqs 0.1n$. \label{cond:removed-a-vertices}
%\item $|B \setminus B'| \leqs 0.1n$. \label{cond:removed-b-vertices}
%\end{enumerate}

\begin{proof}
By Markov inequality, it suffices to show that $\Ex[|E_{\interm} \setminus E'|] \leqs 0.001p\Delta n$. Due to linearity of expectation, $\Ex[|E_{\interm} \setminus E'|] = \sum_{e \in E} \Pr[e \in E_{\interm} \setminus E']$. For a fixed edge $e = (a, b) \in E$, we can bound $\Pr[(a, b) \in E_{\interm} \setminus E']$ as follows.
\begin{align}
\Pr[(a, b) \in E_{\interm} \setminus E'] 
&= \Pr[(a, b) \in E_{\interm}] \Pr[(a, b) \notin E' \mid (a, b) \in E_{\interm}] \nonumber \\
&= p \cdot \Pr[a \notin A' \vee b \notin B' \mid (a, b) \in E_{\interm}] \nonumber \\
&\leqs p \cdot \left(\Pr[a \notin A' \mid (a, b) \in E_{\interm}] + \Pr[b \notin B' \mid (a, b) \in E_{\interm}]\right) \label{eq:bound-removed-edge}
\end{align} 

Let us bound $\Pr[a \notin A' \mid (a, b) \in E_{\interm}]$. Notice that $a \notin A'$ if and only if $|\Gamma_{G_{\interm}}(a)| > \Delta$. Moreover, observe that $\Ex[|\Gamma_{G_{\interm}}(a)| \mid (a, b) \in E_{\interm}] = 1 + p(D - 1)$. Hence, by Markov inequality, we have
\begin{align*}
\Pr[|\Gamma_{G_{\interm}}(a)| > \Delta \mid (a, b) \in E_{\interm}] < \frac{1 + p(D - 1)}{\Delta} = 1/\Delta + pD/\Delta \leqs 0.0002.
\end{align*}
In other words, we have $\Pr[a \notin A' \mid (a, b) \in E_{\interm}] \leqs 0.0002$ and similarly $\Pr[b \notin B' \mid (a, b) \in E_{\interm}] \leqs 0.0002$. Plugging these back into~\eqref{eq:bound-removed-edge}, we have $\Pr[(a, b) \in E_{\interm} \setminus E'] \leqs 0.0004p$. By summing this over all edges $(a, b) \in E$, we have $\Ex[|E_{\interm} \setminus E'|] \leqs 0.0004p\Delta n < 0.001p\Delta n$. As already mentioned, Markov inequality then implies $\Pr[|E_{\interm} \setminus E'| > 0.1p\Delta n] < 0.01$ as desired.
\end{proof}
%Since $\Ex[|E_{\interm}|] = 0.0001\Delta n$, Markov inequality implies that (i) holds with probability $0.9999$.
%Next, we will show that~\ref{cond:removed-edges} fails with probability at most 0.004. 
%Hence, union bound implies that~\ref{cond:edges-interm} and~\ref{cond:removed-edges} both occur with probability more than 0.99. Finally, since each vertex in $A \setminus A'$ is incident to at least $\Delta$ edges in $E_{\interm} \setminus E'$, we have $|A \setminus A'| \leqs |E_{\interm} \setminus E'|/\Delta$. As a result, when~\ref{cond:removed-edges} holds,~\ref{cond:removed-a-vertices} also holds. Similarly,~\ref{cond:removed-edges} implies~\ref{cond:removed-b-vertices} as well.

\subsection{Completeness Analysis} \label{sec:completeness}

The goal of this section is to prove the completeness part of Lemma~\ref{lem:main} as stated below.

\begin{lemma} \label{lem:completeness}
If $\val(\Pi) \geqs 1 - \varepsilon$, then, with probability 0.9, $\minrep(\Pi') \leqs (1 + \varepsilon\Delta)N$.
\end{lemma}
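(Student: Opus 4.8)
The plan is to exhibit an explicit low-cost multilabeling of $\Pi'$ that satisfies all of its edges, built from a near-optimal labeling of the original instance $\Pi$. Since $\val(\Pi) \geqs 1 - \varepsilon$, fix a labeling $\phi : (A \cup B) \to \Sigma$ with $\val_\Pi(\phi) \geqs 1 - \varepsilon$, so that the set $F \subseteq E$ of edges violated by $\phi$ has size at most $\varepsilon |E| = \varepsilon D n$. The natural candidate multilabeling $\psi$ for $\Pi'$ assigns to each vertex $v$ the single label $\phi(v)$, plus some ``repair'' labels to fix the edges of $\Pi'$ that $\phi$ fails to satisfy. Concretely, for every edge $e = (a,b) \in E'$ with $\pi_e(\phi(a)) \neq \phi(b)$, we add $\phi(b)$ to $\psi(a)$ — wait, that does not fix it via the projection; instead the clean choice is: for each such bad edge $(a,b) \in E'$, add the label $\sigma$ to $\psi(a)$ where $\sigma$ is chosen so $\pi_e(\sigma) = \phi(b)$ if one exists, and otherwise just add both a preimage-free fix by putting into $\psi(b)$ the value $\pi_e(\phi(a))$. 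In fact the simplest uniform choice that always works: for each bad edge $(a,b)$, add $\pi_e(\phi(a))$ to $\psi(b)$; then $\phi(a) \in \psi(a)$ and $\pi_e(\phi(a)) \in \psi(b)$, so the edge is satisfied. The base labeling contributes cost exactly $N$, and each repair adds at most one label, so $\cost(\psi) \leqs N + |\{\text{bad edges in } E'\}| \leqs N + |F \cap E'|$.

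It remains to bound $|F \cap E'|$ by $\varepsilon \Delta N = 2\varepsilon \Delta n$ with probability $0.9$. Every edge of $E'$ is incident to two vertices each of degree at most $\Delta$ in $G_{\interm}$, so in particular every vertex touches at most $\Delta$ edges of $E'$. But this alone bounds $|F \cap E'|$ only by $\Delta \cdot |F|$-type quantities in the wrong direction; the correct route is to observe that $|F \cap E'| \leqs |F \cap E_{\interm}|$, and $|F \cap E_{\interm}|$ is a sum of $|F| \leqs \varepsilon D n$ independent indicator variables each equal to $1$ with probability $p = 0.0001\Delta/D$, hence has expectation at most $\varepsilon D n \cdot p = 0.0001 \varepsilon \Delta n$. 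Markov's inequality then gives $|F \cap E_{\interm}| \leqs 0.001 \varepsilon \Delta n \leqs \varepsilon \Delta N$ with probability at least $0.99$. Combining with the cost computation above yields $\minrep(\Pi') \leqs \cost(\psi) \leqs N + \varepsilon\Delta N = (1 + \varepsilon\Delta)N$ on this event.

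One subtlety to handle carefully is the interaction of the two random events: the reduction first samples $E_{\interm}$ and then deletes high-degree vertices to get $E'$, and we also used Proposition~\ref{prop:simple-facts} elsewhere; here we only need the single event ``$|F \cap E_{\interm}|$ is small,'' which occurs with probability $0.99 \geqs 0.9$, so no union bound with Proposition~\ref{prop:simple-facts} is actually required for completeness. I would therefore state the argument using just Markov on $|F \cap E_{\interm}|$. I do not expect any real obstacle here — the only thing to be a little careful about is making sure the repair rule genuinely produces a valid multilabeling for \emph{every} edge of $E'$ (not just the ones in $F$), which is immediate since edges of $E' \setminus F$ are already satisfied by the singleton labels $\phi(a), \phi(b)$, and that the cost accounting adds at most one label per bad edge rather than per vertex.
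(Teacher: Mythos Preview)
Your proposal is correct and follows essentially the same approach as the paper: fix a near-optimal labeling $\phi$, repair the edges of $E'$ it fails to satisfy, and bound the number of such edges via $|F\cap E'|\leqs |F\cap E_{\interm}|$ together with Markov's inequality on $\Ex[|F\cap E_{\interm}|]=p|F|$. The only cosmetic difference is in the repair rule: the paper, for each bad edge $(a,b)$, adds an arbitrary $\sigma$ to $\psi(a)$ and $\pi_{(a,b)}(\sigma)$ to $\psi(b)$ (two extra labels per bad edge, yielding cost $N+2|F\cap E'|$), whereas you add only $\pi_{(a,b)}(\phi(a))$ to $\psi(b)$ (one extra label per bad edge, cost $N+|F\cap E'|$); both fit comfortably under $(1+\varepsilon\Delta)N$ with the same Markov bound.
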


\begin{proof}
Suppose that $\val(\Pi) \geqs 1 - \varepsilon$; that is, there is a labeling $\phi: (A \cup B) \to \Sigma$ for $\Pi$ such that $\val_\Pi(\phi) \geqs 1 - \varepsilon$. Let $E_{\unsat} \subseteq E$ denote the set of (at most $\varepsilon |E|$) edges in $E$ that is not satisfied by $\phi$, i.e., $E_{\unsat} = \{(a, b) \in E \mid \pi_{(a, b)}(\phi(a)) \ne \phi(b)\}$. We define a multilabeling $\psi: (A \cup B) \to \cP(\Sigma)$ for $\Pi'$ as follows. First, start with $\psi(v) = \{\phi(v)\}$ for all $v \in A \cup B$. Then, for every $(a, b) \in E_{\unsat} \cap E'$, pick an arbitrary $\sigma \in \Sigma$, and add $\sigma$ to $\psi(a)$ and $\pi_{(a, b)}(\sigma)$ to $\psi(b)$.

It is clear that $\psi$ satisfies all edges in $E'$ and that $\cost(\psi)$ is at most $N + 2|E_{\unsat} \cap E'| \leqs N + 2|E_{\unsat} \cap E_{\interm}|$. Hence, it suffices to show that $\Pr[|E_{\unsat} \cap E_{\interm}| > 0.5\varepsilon\Delta N] \leqs 0.1$.

Observe that $\Ex[|E_{\unsat} \cap E_{\interm}|] = p |E_{\unsat}| \leqs p (\varepsilon |E|) = 0.00005 \varepsilon \Delta N$. As a result, Markov inequality implies that $\Pr[|E_{\unsat} \cap E'| > 0.5\varepsilon\Delta N] \leqs 0.0001 < 0.1$ as desired.
\end{proof}

\subsection{Soundness Analysis} \label{sec:soundness}

Finally, we will prove the soundness of the reduction, as stated below.

\begin{lemma} \label{lem:soundness}
If $\val(\Pi) < \gamma$, then $\minrep(\Gamma') > 0.06N/\sqrt{\gamma}$ with probability at least 0.9.
\end{lemma}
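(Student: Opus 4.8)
\textbf{Proof proposal for Lemma~\ref{lem:soundness}.}
The plan is to show that if $\val(\Pi) < \gamma$, then with probability $0.9$ over the randomness of the reduction, \emph{every} low-cost multilabeling of $\Pi'$ fails to satisfy some edge. Fix a target cost bound $K := 0.06N/\sqrt{\gamma}$. The key observation is that a multilabeling $\psi$ of cost at most $K$ is a "small" object: since $\cost(\psi) = \sum_v |\psi(v)| \le K$, there is no way for $\psi$ to put many labels on many vertices. I will union-bound over all such $\psi$ the probability that $\psi$ satisfies all of $E'$. Because $E' \subseteq E_{\interm}$ and deleting high-degree vertices only helps the adversary, it suffices (after handling the endpoints trimmed away, cf.\ Proposition~\ref{prop:simple-facts}) to bound the probability that $\psi$ satisfies all of $E_{\interm}$, and each edge of $E$ lands in $E_{\interm}$ independently with probability $p$.

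First I would set up the per-multilabeling bound. Call an edge $(a,b)\in E$ \emph{$\psi$-good} if $\pi_{(a,b)}(\psi(a))\cap\psi(b)\ne\emptyset$, and let $E_{\good}(\psi)$ be the set of $\psi$-good edges. If $\psi$ satisfies every edge of $E_{\interm}$, then $E_{\interm}\subseteq E_{\good}(\psi)$, which happens with probability exactly $(1-p)^{|E\setminus E_{\good}(\psi)|} \le \exp(-p\,|E\setminus E_{\good}(\psi)|)$. So the danger is multilabelings for which $E_{\good}(\psi)$ is almost all of $E$. Here is where $\val(\Pi)<\gamma$ enters: I claim $|E_{\good}(\psi)| \le (\gamma + (\cost(\psi)/N)^2)\cdot|E|$ or something of this flavor — more precisely, if every vertex $v$ had $|\psi(v)|\le s$, then a standard averaging/picking argument (pick a uniformly random label from each $\psi(v)$, or just from each nonempty one) turns $\psi$ into a randomized labeling satisfying at least $|E_{\good}(\psi)|/s^2$ fraction... but the sets have varying sizes, so the right statement is via Cauchy–Schwarz: $|E_{\good}(\psi)| \le \sum_{(a,b)\in E_{\good}} |\psi(a)|\,|\psi(b)| / 1$ is not it either. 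Let me instead argue: define $\phi$ by choosing $\phi(v)$ uniformly at random from $\psi(v)$ (ignoring empty sets, which can only lose edges); then $\Pr[(a,b)\text{ satisfied by }\phi \mid (a,b)\in E_{\good}] \ge 1/(|\psi(a)||\psi(b)|)$, so $\val(\Pi)\cdot|E| \ge \Ex[\#\text{sat by }\phi] \ge \sum_{(a,b)\in E_{\good}} 1/(|\psi(a)||\psi(b)|)$. Combining with $\sum_v |\psi(v)|\le K$ and AM–GM / convexity (the worst case spreads mass uniformly, giving $|\psi(v)| \approx K/N$ on a $\Theta(N)$-sized support, hence $1/(|\psi(a)||\psi(b)|) \approx (N/K)^2$), one gets $|E_{\good}(\psi)| \le \gamma (K/N)^2 |E| \cdot O(1)$, and with $K = 0.06N/\sqrt\gamma$ this is a tiny constant fraction of $|E|$. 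Thus $|E\setminus E_{\good}(\psi)| \ge \Omega(|E|) = \Omega(nD)$, and the probability that $\psi$ survives is at most $\exp(-\Omega(pnD)) = \exp(-\Omega(\Delta n))$.

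Next I would run the union bound. The number of multilabelings of cost at most $K$ on $N$ vertices over alphabet $\Sigma$ is at most $\binom{N|\Sigma|}{\le K} \le (N|\Sigma|)^K \le \exp(K\log(N|\Sigma|))$ roughly (choosing which $\le K$ vertex–label incidences are present). So the union bound gives failure probability at most $\exp(K\log(N|\Sigma|) - \Omega(\Delta n))$, and since $K = 0.06N/\sqrt\gamma = \Theta(n/\sqrt\gamma)$ while $\Delta = \Theta(\log|\Sigma|/\sqrt\gamma)$, the exponent is $n\left(O(\tfrac{\log(N|\Sigma|)}{\sqrt\gamma}) - \Omega(\tfrac{\log|\Sigma|}{\sqrt\gamma})\right)$ — the $\log N$ here is a problem, so I would instead observe that only the \emph{support structure} matters and bound more carefully, or note that the instance $\Pi'$ has at most $\Delta N$ edges so "relevant" vertices number $\le \Delta N$ and the dangerous multilabelings have support confined appropriately — concretely, the cleanest fix is: the number of multilabelings of cost $\le K$ is at most $2^{K\log_2(N|\Sigma|)}$, and I should have chosen $\Delta$ with a $\log N$ factor... but the lemma statement has $\Delta$ independent of $N$. \emph{Here is the actual resolution, which I expect to be the main obstacle}: one must be cleverer and union-bound only over the combinatorial \emph{type} of $\psi$ restricted to each edge, or equivalently observe that for the soundness bound we only need to certify that no \emph{subgraph-specific} multilabeling works, and the number of edges in $\Pi'$ is $O(\Delta N)$, so it suffices to consider multilabelings supported on the $O(\Delta N)$ vertices touching $E'$ — still $N$-dependent. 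The real trick (as in Austrin–Khot–Safra) is that the bad event for a fixed $\psi$ has probability $\exp(-\Omega(\Delta n))$ which beats $\exp(O(K \cdot \mathrm{polylog}))$ because we should bound the number of multilabelings of cost $\le K$ by $\binom{N}{K/1}\cdot|\Sigma|^{K}$-style and absorb $\log N \le$ (something) using $D$ being arbitrarily large and $N$ fixed: since $N$ is a fixed constant of the instance $\Pi$ (it does not grow!), $\log N$ is just a constant and gets swallowed into the slack of the $10^6$ factor in $\Delta$. So concretely: $\Delta \ge 10^6 \cdot 2\log(2|\Sigma|)/\sqrt\gamma$ together with $K=0.06N/\sqrt\gamma$ and the $\exp(-0.00001\,\Delta n/\sqrt\gamma \cdot \text{const})$-type survival bound (coming from $p = 0.0001\Delta/D$, $|E\setminus E_{\good}| \ge 0.9|E| = 0.9 nD$) makes the union bound $\exp\big(K\log(N|\Sigma|) - 0.00005 \Delta n\big)$ negative for large enough $N$ relative to... — I would double-check that the paper really does let $N$ be treated as large, or alternatively reread: since completeness wants $\minrep \le 2N$ and soundness wants $\minrep > 2gN$, and everything scales with $N$, the union bound must close for all $N$; the mechanism is that $K/N$ and $\Delta$ are both $\Theta(1/\sqrt\gamma)$-scale and the $\log(N|\Sigma|)$ is defeated because we actually bound the number of cost-$\le K$ multilabelings by $(e N |\Sigma| / K)^K \le |\Sigma|^{2K}$ using $K \ge N$ (indeed $K = 0.06N/\sqrt\gamma \ge 0.06 N$ since $\gamma < 1$... no wait that's $<N$) — this is exactly the delicate point I flag as the crux and would work out carefully, getting the entropy term down to $O(K\log|\Sigma|) = O(N\log|\Sigma|/\sqrt\gamma)$ and comparing against $\Omega(\Delta n) = \Omega(N\log|\Sigma|/\sqrt\gamma)$ with the $10^6$ constant providing the needed margin. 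Finally, I fold in Proposition~\ref{prop:simple-facts} (the $\le 0.1p\Delta n$ trimmed edges are few, so $|E'\setminus E_{\good}(\psi)| \ge |E_{\interm}\setminus E_{\good}(\psi)| - 0.1p\Delta n$ still $= \Omega(pnD)$) and take a union bound over the two $0.9$-probability events (this one and Proposition~\ref{prop:simple-facts}), concluding $\minrep(\Pi') > K = 0.06N/\sqrt\gamma$ with probability $\ge 0.9$.
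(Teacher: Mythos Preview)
Your architecture is exactly the paper's: (i) show that any multilabeling $\psi$ of cost at most $K := 0.06N/\sqrt{\gamma}$ leaves a constant fraction of $E$ unsatisfied; (ii) Chernoff-bound the probability that $E_{\interm}$ contains few of those unsatisfied edges; (iii) union-bound over all such $\psi$; (iv) invoke Proposition~\ref{prop:simple-facts} to transfer from $E_{\interm}$ to $E'$. However, the two places you yourself flag as delicate are genuine gaps, and each has a clean resolution you are missing.

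\emph{Bounding $|E_{\good}(\psi)|$.} The claim ``the worst case spreads mass uniformly'' is not justified and is in fact false as a route to your stated bound. Given $\sum_{(a,b)\in E_{\good}} 1/(|\psi(a)||\psi(b)|) \le \gamma|E|$ and $\sum_v |\psi(v)| \le K$, concentrating labels on few vertices makes each summand tiny, so uniform is not extremal and no convexity/AM--GM argument yields $|E_{\good}| \le O(\gamma(K/N)^2)|E|$ with a small constant. The paper (Proposition~\ref{prop:max-to-min}) instead proves only the weaker, sufficient statement $\val_\Pi(\psi) < 0.5$, via Markov: since $\cost(\psi)\le K$, fewer than $0.4n$ vertices can have $|\psi(v)| > 0.3/\sqrt{\gamma}$, so at most $0.4Dn$ edges touch such a vertex; hence if $|E_{\sat}| \ge 0.5Dn$ then at least $0.1Dn$ of the $\psi$-satisfied edges have \emph{both} endpoints with $|\psi(\cdot)|\le 0.3/\sqrt{\gamma}$, and on each of those the random labeling succeeds with probability $> 10\gamma$, contradicting $\val(\Pi)<\gamma$.

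\emph{The $\log N$ in the union bound.} This is a non-issue, and the fix is simpler than any of the routes you try. In $\binom{N|\Sigma|}{K} \le (eN|\Sigma|/K)^K$ the ratio $N/K = \sqrt{\gamma}/0.06$ is an absolute constant, so the $N$ simply cancels: the count of cost-$\lfloor K\rfloor$ multilabelings is at most $\bigl(e|\Sigma|\sqrt{\gamma}/0.06\bigr)^{K} \le (2|\Sigma|)^{0.5N/\sqrt{\gamma}}$. No $\log N$ term survives, and there is no need to restrict supports, invoke $D$ large, or treat $N$ as a constant. This is then beaten by the per-$\psi$ Chernoff bound $(2|\Sigma|)^{-10N/\sqrt{\gamma}}$ (Proposition~\ref{prop:chernoff}), with the $10^6$ in the definition of $\Delta$ supplying precisely the slack you anticipated.
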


Before we prove Lemma~\ref{lem:soundness}, let us provide a couple useful propositions. The first states that any multilabeling of cost $\leqs 0.06N/\sqrt{\gamma}$ must leave more than half of the edges of $\Pi$ unsatisfied:

\begin{proposition} \label{prop:max-to-min}
If $\val(\Pi) < \gamma$, any multilabeling $\psi$ of $\Pi$ with $\cost(\psi) \leqs 0.06N/\sqrt{\gamma}$ has $\val_{\Pi}(\psi) < 0.5$.
\end{proposition}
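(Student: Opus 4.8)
The plan is to argue by contradiction: suppose $\psi$ is a multilabeling of $\Pi$ with $\cost(\psi) \leqs 0.06N/\sqrt{\gamma}$ that satisfies at least half of the edges of $\Pi$. I will convert $\psi$ into an honest \emph{labeling} $\phi$ (single label per vertex) whose value is at least $\gamma$, contradicting the hypothesis $\val(\Pi) < \gamma$. The natural way to do this is the standard ``random decoding'' trick: for each vertex $v$, let $\phi(v)$ be a uniformly random element of $\psi(v)$ (and an arbitrary label if $\psi(v) = \emptyset$). For an edge $(a,b)$ that $\psi$ satisfies, there is some $\sigma_a \in \psi(a)$ with $\pi_{(a,b)}(\sigma_a) \in \psi(b)$; the probability that $\phi$ picks this $\sigma_a$ at $a$ and the matching label at $b$ is at least $\frac{1}{|\psi(a)|\,|\psi(b)|}$. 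So the expected fraction of edges of $\Pi$ satisfied by $\phi$ is at least $\frac{1}{|E|}\sum_{(a,b)\text{ sat by }\psi} \frac{1}{|\psi(a)|\,|\psi(b)|}$, and it suffices to show this sum exceeds $\gamma |E|$.

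The key quantitative step is to lower-bound $\sum_{(a,b)} \frac{1}{|\psi(a)|\,|\psi(b)|}$ over the satisfied edges given that $\cost(\psi) = \sum_v |\psi(v)| \leqs 0.06N/\sqrt{\gamma}$ and the constraint graph is $D$-regular with $D$ large. Here I would use convexity: since $\Pi$ is regular, each vertex in $A$ has exactly $D$ incident edges, so $\sum_{a \in A}$ of anything edge-local can be compared against the ``average'' vertex. By the AM--HM (or Cauchy--Schwarz) inequality, $\sum_{(a,b) \in E_{\sat}} \frac{1}{|\psi(a)||\psi(b)|}$ is minimized, for a fixed budget, when the budget is spread evenly, and one gets a bound of roughly $\frac{|E_{\sat}|^2}{\big(\sum_a |\psi(a)|\big)\big(\sum_b |\psi(b)|\big)}$-type quantity — more carefully, grouping edges by their $A$-endpoint and applying Cauchy--Schwarz twice (once over the $A$-side, once over the $B$-side). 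Plugging in $|E_{\sat}| \geqs 0.5 |E| = 0.5 D n$, $\sum_{a\in A}|\psi(a)| \leqs \cost(\psi) \leqs 0.06N/\sqrt\gamma = 0.12 n/\sqrt\gamma$, and similarly for $B$, the regularity factor $D$ cancels and what remains is a bound of the form $\gtrsim \frac{(0.5)^2}{(0.12/\sqrt\gamma)^2}\cdot |E| = \frac{0.25\gamma}{0.0144}|E| > \gamma|E|$, which is exactly what we need (the constant $0.06$ is chosen precisely so that $0.25/(0.12)^2 > 1$, with a comfortable margin).

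The main obstacle is getting the Cauchy--Schwarz bookkeeping right so that the $D$-regularity really does cancel cleanly and the denominators end up being exactly $\sum_{a}|\psi(a)|$ and $\sum_b|\psi(b)|$ rather than something weaker. One subtlety is that an edge $(a,b)$ could have $\psi(a)$ or $\psi(b)$ empty — but such edges are not satisfied by $\psi$, so they are automatically excluded from $E_{\sat}$ and cause no division-by-zero. Another mild point is that Chan's hard instance (and hence $\Pi$) is regular by assumption in Lemma~\ref{lem:main}, so I may freely use $|E| = Dn$; if one wanted to avoid regularity, a left-regular bound would still suffice since only the $A$-side degree is used to pass from an edge sum to a vertex sum on that side. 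Once the inequality $\sum_{(a,b)\in E_{\sat}} \frac{1}{|\psi(a)||\psi(b)|} > \gamma|E|$ is established, the existence of a labeling $\phi$ with $\val_\Pi(\phi) \geqs \gamma$ follows by averaging, contradicting $\val(\Pi) < \gamma$ and completing the proof.
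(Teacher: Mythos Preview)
Your overall strategy --- random decoding of $\psi$ into a labeling $\phi$ and showing $\Ex[\val_\Pi(\phi)] \geqs \gamma$ --- is exactly the paper's. The gap is in the ``key quantitative step'': the Cauchy--Schwarz bound you write down does \emph{not} yield what you claim. A single Cauchy--Schwarz gives
\[
\sum_{(a,b)\in E_{\sat}}\frac{1}{|\psi(a)||\psi(b)|}\ \geqs\ \frac{|E_{\sat}|^2}{\sum_{(a,b)\in E_{\sat}}|\psi(a)||\psi(b)|}\ \geqs\ \frac{|E_{\sat}|^2}{\bigl(\sum_a|\psi(a)|\bigr)\bigl(\sum_b|\psi(b)|\bigr)},
\]
and plugging in $|E_{\sat}|\geqs 0.5Dn$ and $\sum_a|\psi(a)|,\sum_b|\psi(b)|\leqs 0.12n/\sqrt\gamma$ gives a lower bound of order $D^2\gamma$, \emph{not} $|E|\gamma = Dn\gamma$. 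Your sentence ``the regularity factor $D$ cancels'' is precisely where the algebra slips: $|E|^2/n^2 = D^2$, not $Dn$. Since nothing in the setup forces $D\gtrsim n$ (we only assume $D\geqs 10000\Delta$, which is independent of $n$), the bound can be far too weak. Concretely, if the entire cost budget sits on a single vertex $a^\ast$ and $E_{\sat}$ avoids $a^\ast$, the true sum is $\Theta(Dn)$ while your Cauchy--Schwarz bound is only $\Theta(D^2\sqrt\gamma)$; the inequality you need is true, but your method cannot see it. I do not see a ``double Cauchy--Schwarz'' that repairs this cleanly either, because grouping by $A$-endpoint still leaves $\sum_{b\in N_{\sat}(a)}|\psi(b)|$ in the denominator with no uniform control.

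The paper sidesteps this with a threshold (Markov) argument rather than convexity: since $\cost(\psi)\leqs 0.06N/\sqrt\gamma$, fewer than $0.4n$ vertices have $|\psi(v)|>0.3/\sqrt\gamma$, so by $D$-regularity fewer than $0.4Dn$ edges touch such a vertex; hence more than $0.1Dn$ edges of $E_{\sat}$ have both endpoints with $|\psi|\leqs 0.3/\sqrt\gamma$, and on each of these $\phi$ succeeds with probability at least $\gamma/0.09>10\gamma$. That gives $\Ex[\val_\Pi(\phi)]>\gamma$ directly. This is the missing idea you need in place of Cauchy--Schwarz.
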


\begin{proof}
Suppose for the sake of contradiction that $\val(\Pi) < \gamma$ but, for some multilabeling $\psi$ with $\cost(\psi) \leqs (0.06/\sqrt{\gamma})N$, we have $\val_{\Pi}(\psi) \geqs 0.5$. Consider a random labeling $\phi$ of $\Pi$ generated as follows: for every $v \in A \cup B$, if $\psi(v)$ is non-empty, then let $\phi(v)$ be a random element of $\psi(v)$. Otherwise, let $\phi(v)$ be an arbitrary label. We will show that $\Ex[\val_{\Pi}(\phi)] \geqs \gamma$; this implies that there exists a labeling with value at least $\gamma$ which is a contradiction to $\val(\Pi) < \gamma$.

To bound $\Ex[\val_{\Pi}(\phi)]$, let $E_{\sat}$ denote the set of all edges satisfied by $\psi$; that is, $E_{\sat} = \{(a, b) \in E \mid \exists \sigma_a \in \psi(a), \sigma_b \in \psi(b) \text{ such that } \pi_{(a, b)}(\sigma_a) = \sigma_b\}$. Since $\val_{\Pi}(\psi) \geqs 0.5$, we have $|E_{\sat}| \geqs 0.5Dn$.

Furthermore, let $A_{\leqs 0.3/\sqrt{\gamma}}$ and $B_{\leqs 0.3/\sqrt{\gamma}}$ be the sets of vertices in $A$ and $B$ respectively such that $\psi$ assigns at most $0.3/\sqrt{\gamma}$ labels to it, i.e., $A_{\leqs 0.3/\sqrt{\gamma}} = \{a \in A \mid |\psi(a)| \leqs 0.3/\sqrt{\gamma}\}$ and $B_{\leqs 0.3/\sqrt{\gamma}} = \{b \in B \mid |\psi(b)| \leqs 0.3/\sqrt{\gamma}\}$. Since $\cost(\psi) \leqs (0.06/\sqrt{\gamma})N$, we can conclude that $|A \setminus A_{\leqs 0.3/\sqrt{\gamma}}| + |B \setminus B_{\leqs 0.3/\sqrt{\gamma}}| < 0.2N = 0.4n$. Thus, the number of edges with at least one endpoint outside $A_{\leqs 0.3/\sqrt{\gamma}} \cup B_{\leqs 0.3/\sqrt{\gamma}}$ is less than $0.4Dn$. Due to this and our earlier bound on $|E_{\sat}|$, we have $|E_{\sat} \cap (A_{\leqs 0.3/\sqrt{\gamma}} \times B_{\leqs 0.3/\sqrt{\gamma}})| > 0.5Dn - 0.4Dn = 0.1Dn$. 

Each $(a, b) \in E_{\sat} \cap (A_{\leqs 0.3/\sqrt{\gamma}} \times B_{\leqs 0.3/\sqrt{\gamma}})$ is satisfied by $\phi$ with probability $\frac{1}{|\psi(a)||\psi(b)|} > 10\gamma$. Thus, the expected number of edges satisfied by $\phi$ is at least $10\gamma|E_{\sat} \cap (A_{\leqs 0.3/\sqrt{\gamma}} \times B_{\leqs 0.3/\sqrt{\gamma}})| > \gamma Dn$. In other words, $\Ex[\val_{\Pi}(\phi)] > \gamma$, which, as pointed out earlier, is a contradiction.
\end{proof}

The above proposition allows us to apply Chernoff bound to upper bound the probability that each multilabeling leaves few edges unsatisfied in $\Pi_{\interm}$, as stated more precisely below.

\begin{proposition} \label{prop:chernoff}
Let $\psi$ be any multilabeling of $\Pi$ such that $\val_{\Pi}(\psi) < 0.5$. Then, with probability at most $(2|\Sigma|)^{-10N/\sqrt{\gamma}}$, less than $0.2pD n$ edges in $E_{\interm}$ are unsatisfied by $\psi$. 
\end{proposition}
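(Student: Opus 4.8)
The plan is to fix an arbitrary multilabeling $\psi$ of $\Pi$ with $\val_{\Pi}(\psi) < 0.5$ and bound the probability that fewer than $0.2pDn$ of the randomly subsampled edges are unsatisfied by $\psi$. Since $\val_{\Pi}(\psi) < 0.5$, the number of edges of $\Pi$ unsatisfied by $\psi$ is strictly more than $0.5 \cdot Dn = 0.5Dn$ (recall $|E| = Dn$). Let $E_{\mathrm{unsat}}(\psi) \subseteq E$ be this set, so $|E_{\mathrm{unsat}}(\psi)| > 0.5Dn$. Each edge of $E$ is placed in $E_{\interm}$ independently with probability $p$, and an unsatisfied edge of $\Pi$ stays unsatisfied by $\psi$ in $\Pi_{\interm}$ (the constraints are unchanged). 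Hence the number of edges of $E_{\interm}$ unsatisfied by $\psi$ stochastically dominates $\mathrm{Bin}(|E_{\mathrm{unsat}}(\psi)|, p)$, whose mean is $p|E_{\mathrm{unsat}}(\psi)| > 0.5pDn$.

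First I would invoke a multiplicative Chernoff lower-tail bound: for $X \sim \mathrm{Bin}(m, p)$ with $\mu = mp$ and any $0 < \delta < 1$, $\Pr[X < (1-\delta)\mu] \leqs \exp(-\delta^2 \mu / 2)$. Taking $\delta$ a constant (say $\delta = 0.6$ so that $(1-\delta) \cdot 0.5pDn = 0.2pDn$), the probability that fewer than $0.2pDn$ edges of $E_{\interm}$ are unsatisfied by $\psi$ is at most $\exp(-\Omega(pDn))$. Now substitute $pD = 0.0001\Delta$ and $\Delta = 10^6 \cdot 2\log(2|\Sigma|)/\sqrt{\gamma}$; this gives $pDn = \Omega(n \log(2|\Sigma|)/\sqrt{\gamma})$, so the failure probability is at most $\exp(-\Omega(n\log(2|\Sigma|)/\sqrt{\gamma})) = (2|\Sigma|)^{-\Omega(n/\sqrt{\gamma})}$. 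The only thing to check is that the hidden constants line up so that the exponent is at least $10N/\sqrt{\gamma} = 20n/\sqrt{\gamma}$ in absolute value; since $\Delta$ carries the large constant $10^6$ and $N/\sqrt{\gamma} = 2n/\sqrt{\gamma}$, there is ample slack, so the bound $(2|\Sigma|)^{-10N/\sqrt{\gamma}}$ follows.

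I do not expect any serious obstacle here — the statement is a clean single-multilabeling concentration bound, and the real work (the union bound over all $(2|\Sigma|)^{O(N/\sqrt{\gamma})}$ relevant multilabelings, using exactly this per-multilabeling estimate) is deferred to the proof of Lemma~\ref{lem:soundness}. The one mildly delicate point is that Proposition~\ref{prop:chernoff} as stated requires only $\val_{\Pi}(\psi) < 0.5$, not a cost bound; so the proof should not reference $\cost(\psi)$ at all and should rely purely on the unsatisfied-edge count together with independence of the subsampling. It is worth being careful that the event "fewer than $0.2pDn$ edges unsatisfied" is a lower-tail event for the number of unsatisfied subsampled edges, so the lower-tail Chernoff inequality is the correct tool; an upper-tail bound would be useless here.
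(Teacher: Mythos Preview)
Your proposal is correct and matches the paper's proof essentially line for line: the paper also lets $E_{\mathrm{unsat}}$ be the unsatisfied edges, notes $|E_{\mathrm{unsat}}| > 0.5Dn$, applies the lower-tail Chernoff bound with $\delta = 0.6$ (so $(1-\delta)\cdot 0.5pDn = 0.2pDn$), and then substitutes $pD = 0.0001\Delta$ and $\Delta = 10^6\cdot 2\log(2|\Sigma|)/\sqrt{\gamma}$ to reach $(2|\Sigma|)^{-10N/\sqrt{\gamma}}$. Your remarks about not invoking $\cost(\psi)$ and about this being a lower-tail event are exactly right.
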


\begin{proof}
Let $E_{\unsat} \subseteq E$ denote the set of edges unsatisfied by $\psi$ in $\Pi$. Since $\val_{\Pi}(\psi) < 0.5$, we have $|E_{\unsat}| > 0.5Dn$. Since each edge $e \in E_{\unsat}$ is included to $E_{\interm}$ independently with probability $p$, we can apply Chernoff bound and conclude that the probability that $E_{\interm}$ contains at least $0.2pDn$ such edges is at most $\exp(-(0.6)^2(0.5pDn)) \leqs \exp(-10^{-5}\Delta n) \leqs (2|\Sigma|)^{-10N/\sqrt{\gamma}}$ as desired, where the last inequality follows from the choice of $\Delta$.
\end{proof}

With the above propositions in place, Lemma~\ref{lem:soundness} can now be proved by a simple union bound.

\begin{proof}[Proof of Lemma~\ref{lem:soundness}]
Suppose that $\val(\Pi) < \gamma$. Observe that, for any $t \in \N$, there is a one-to-one correspondence between multilabelings of $\Pi$ of cost $t$ and subsets of $(A \cup B) \times \Sigma$ of size $t$. As a result, the number of multilabelings of $\Pi$ of cost\footnote{Note that here we need $0.06N / \sqrt{\gamma} \leqs |\Sigma|N$. This is true because a random labeling of $\Pi$ satisfies $1/|\Sigma|$ fraction of edges in $\Pi$; hence, the assumption $\val(\Pi) < \gamma$ implies that $|\Sigma| > 1/\gamma$.} $\lfloor 0.06N / \sqrt{\gamma} \rfloor$ is 
\begin{align*}
\binom{N|\Sigma|}{\lfloor 0.06N / \sqrt{\gamma} \rfloor} 
\leqs \left(\frac{eN|\Sigma|}{\lfloor 0.06N / \sqrt{\gamma} \rfloor}\right)^{\lfloor 0.06 N / \sqrt{\gamma} \rfloor}
\leqs \left(\frac{eN|\Sigma|}{0.06N / \sqrt{\gamma}}\right)^{0.06 N / \sqrt{\gamma}}
\leqs \left(2|\Sigma|\right)^{0.5N / \sqrt{\gamma}}
\end{align*}
where the first inequality follows from the fact that $\binom{M}{K} \leqs (eM/K)^K$ for all $M, K \in \N$ and the second inequality is from the fact that $(eM/x)^x$ is an increasing function for $x \in [0, M)$.

Hence, by Proposition~\ref{prop:max-to-min}, Proposition~\ref{prop:chernoff} and union bound, we can conclude that, with probability at least $1 - (2|\Sigma|)^{-9.5N/\sqrt{\gamma}} > 0.99$, every multilabeling $\psi$ of cost $\lfloor 0.06N / \sqrt{\gamma} \rfloor$ leaves at least $0.2pDn$ edges in $E_{\interm}$ unsatisfied. Finally, recall from Proposition~\ref{prop:simple-facts} that, with probability 0.99, we have $|E_{\interm} \setminus E'| \leqs 0.1pDn$. Thus, with probability 0.98, no multilabeling $\psi$ of cost $\lfloor 0.06N / \sqrt{\gamma}\rfloor$ can satisfy all the edges in $E'$, which concludes our proof.
\end{proof}

\section{Discussions and Open Questions}

While our result implies almost optimal dependency between the gap and the degree for inapproximability of Min-Rep, getting an (almost) optimal dependency for alphabet size still remains a challenging open question. This is true not only for Min-Rep but also for Max-Rep. For Max-Rep, a random labeling satisfies at least $1/|\Sigma|$ fraction of edges in expectation, giving a $1/|\Sigma|$-approximation for the problem. A slightly better $\Omega(\log |\Sigma|/|\Sigma|)$-approximation is known~\cite{KKT16}; hence, the best trade-off one could hope for is $|\Sigma| = O(g \log g)$ where $g$ is the target gap. Again, we remark that the best known dependency is from Chan's work~\cite{Chan16} with $|\Sigma| = O(g^2\log g)$.

For Min-Rep, the situation is less clear. For instance, it is not even clear whether there is an approximation algorithm with ratio depending only on $|\Sigma|$. As a result, it may be interesting to study the three-way trade-off between $|\Sigma|$, the maximum degree $\Delta$ and the gap $g$ in this case. %However, it is possible to derive an $O(\sqrt{|\Sigma|\log \Delta})$-approximation. Consider the following two-step algorithm. First, assign each vertex a random subset of $\Sigma$ of size $t := C\sqrt{|\Sigma| \log\Delta}$ where $C$ is a sufficiently large constant to be chosen later. Secondly, for every edge $(a, b) \in E$ that is not satisfied by the subsets picked in the previous step, pick an arbitrary $\sigma_a \in \Sigma$, and add $\sigma_a$ and $\pi_{(a, b)}(\sigma_a)$ to $a$ and $b$ respectively. The key observation is that the probability that each edge is not satisfied by the first step is $\exp(-\Omega(t^2/|\Sigma|))$; hence, by picking $C$ to be sufficiently large, the expected fraction of unsatisfied edges from the first step is at most $1/\Delta$. Hence, the expected cost of the output solution is $O(t n) = O(\sqrt{|\Sigma|\log \Delta}) \cdot n$. It is interesting whether one can prove hardness of approximation that matches this upper bound.

Another interesting question is to try to obtain hardness of approximation in terms of the number of variables $N = |A| + |B|$. This can be viewed as hardness in terms of $\Delta$ but when we allow $\Delta$ to be as large as $N$. To the best of our knowledge, no hardness of the form $N^{\delta}$ for some constant $\delta > 0$ is known for this regime, although this is believed by some to be true. (For instance, the projection games conjecture~\cite{Mosh15} implies such hardness.) Recently, Dinur and the author~\cite{DM18} proves $N^{1 - o(1)}$ ETH-hardness of approximation for a closely related 2-CSP problem, which is exactly the same is Max-Rep except that the constraints do not have to be projections. Unfortunately, the techniques there do not seem to directly apply to Label Cover.

\subsection*{Acknowledgment}

I would like to thank Bundit Laekhanukit for helpful discussions.

%\newpage
\bibliography{main}
\bibliographystyle{alpha}

\end{document}